\documentclass[11pt]{article}
\usepackage{}

\usepackage{pgf}
\usepackage[english]{babel}
\usepackage[utf8]{inputenc}
\usepackage{booktabs}
\usepackage{caption}
\usepackage[T1]{fontenc}
\usepackage[font=small,labelfont=bf,tableposition=top]{caption}
\usepackage{threeparttable}

\usepackage{datetime}
\usepackage{geometry}
\geometry{a4paper,scale=0.75}
\usepackage{array}
\usepackage{xspace}

\usepackage{amsmath,amsfonts,amssymb,amsopn,amsthm,amscd}
\theoremstyle{plain}
\usepackage{graphicx} 
\usepackage{epstopdf}
\usepackage{enumerate}

\def\Box{\vcenter{\vbox{\hrule\hbox{\vrule
     \vbox to 8.8pt{\hbox to 10pt{}\vfill}\vrule}\hrule}}}

\usepackage{indentfirst}
\setlength{\parindent}{1em}

\newcommand{\Ff}{{\mathbb F}}
\newcommand{\F}{{\mathbb F}}


   %
       %
           %
\newcommand\cc{{\mathcal C}}        %

\def\mC{{\mathcal C}}

\def\mC{{\mathcal C}}

\def\Ga{{\alpha}}

\def\bc{{\bf c}}

\def\bc{{\bf c}}

\def\GRS{{\rm GRS}}
\def\Tr{\operatorname{Tr}}

\newtheorem{thm}{Theorem}[section]
\newtheorem{lem}[thm]{Lemma}

\newtheorem{cor}[thm]{Corollary}

\newtheorem{prop}[thm]{Proposition}

\usepackage{todonotes}

\begin{document}

%

\title{Optimal repairing schemes for Reed-Solomon codes with alphabet sizes linear in lengths under the rack-aware model}

\author{ Lingfei Jin, Gaojun Luo and Chaoping Xing}

\maketitle

\let\thefootnote\relax\footnotetext{L. Jin is with Shanghai Key Laboratory of Intelligent Information Processing, School of Computer Science, Fudan  University, Shanghai 200433, P. R. China. Fudan-Zhongan Joint Laboratory of Blockchain and Information Security, Shanghai Engineering Research Center of Blockchain, and Shanghai Institute of Intelligent Electronics $\&$ Systems, Shanghai, P.R. China, Email: {lfjin@fudan.edu.cn }}

\let\thefootnote\relax\footnotetext{G. Luo is with Department of Mathematics, Nanjing University of Aeronautics and Astronautics, Nanjing 211100, China. Email: {gjluo1990@163.com}}

\let\thefootnote\relax\footnotetext{C. Xing is with School of Electronic Information and Electrical Engineering,
Shanghai Jiao Tong University, Shanghai 200240, China. He is also with
School of Physical and Mathematical Sciences, Nanyang Technological University, Singapore. Email:
{xingcp@ntu.edu.sg}}


\begin{abstract}
In modern practical data centers, storage nodes are usually organized into equally sized groups, which is called racks. The cost of cross-rack communication is much more expensive compared with the intra-rack communication cost. The codes for this system are called rack-aware regenerating codes. Similar to standard minimum storage regenerating  (MSR) codes, it is a challenging task to construct minimum storage rack-aware regenerating (MSRR) codes achieving the cut-set bound. The known constructions of MSRR codes achieving the cut-set bound give codes with alphabet size $q$ exponential in the code length $n$, more precisely, $q=\Omega(\exp(n^n))$.

The main contribution of this paper is to provide explicit construction of  MSRR codes achieving the cut-set bound with the alphabet size  linear in $n$. To achieve this goal, we first present a general framework to repair Reed-Solomon codes. It turns out that the known repairing schemes of Reed-Solomon codes can be realized under our general framework. Several techniques are used in this paper. In particular, we use the degree decent method to repair failure node. This technique allows us to get Reed-Solomon codes with the alphabet size  linear in $n$. The other techniques include choice of good polynomials. Note that good polynomials are used for construction of locally repairable code in literature. To the best of our knowledge, it is the first time in this paper to make use of good polynomials for constructions of regenerating codes.

{\bf Keywods}: Rack-aware model, Reed-solomon codes, minimum storage regenerating codes.

\end{abstract}

\section{Introduction}

In a distributed storage system where data is written in a large number of physical storage nodes, failure of a node or few nodes render a portion of the data inaccessible. Erasure error correcting codes are widely used as a coding technique with good reliability and low storage redundancy compared with replication.  When a node fails, one need to recover the information in the failed node by connecting few other active nodes. Therefore, the repair bandwidth is the total amount of the information that one need to download to complete the repair procedure.The repair bandwidth and storage overhead are two important metrics for distributed storage scheme.

For a data file, we usually divide the original file into $k$ blocks where each block is an element of a finite field $F$ or  a vector over $F$. Among all those erasure codes used in distributed storage, Reed-Solomon (RS for short) codes are widely deployed which are maximum distance separable (MDS) codes. We refer an $[n,k]$ RS code as a code with length $n$ and dimension $k$. An $[n,k]$ RS code encodes a data file of $k$ blocks into $n$ blocks by adding $n-k$ redundancy and then distribute $n$ blocks into $n$ physical storage nodes.

The fact that single node failure is the most common scenario makes the problem of repairing one node failure as the most interesting problem. Therefore, we will focus on the recovery of one failure node. If we encode the original file using a RS code, the conventional scheme is to  download any $k$ active nodes to repair the failure node. In fact, the MDS property guarantees that one can recover the whole data by downloading any $k$ nodes.
However,  this naive  repairing scheme has bandwidth  much larger than waht is needed for recovering one failure node and hence it  is not efficient for this scenario.

To minimize the repair bandwidth in the repairing procedure, the concept of regenerating codes (RC) was formulated  \cite{DG10}. It was shown in \cite{DG10} that there is a trade-off between storage and repair bandwidth. Codes lying over this trade-off are called {\it regenerating codes}. There are two special cases of regenerating codes that are interesting from the theoretical point of view. One is called minimum bandwidth regenerating (MBR for short) code where the minimum repair bandwidth is needed to repair the failed nodes. The other case is called minimum storage regenerating (MSR for short) code that corresponds to the minimum storage. We refer the reference \cite{BKV} for an excellent survey on regenerating codes.

There are many follow-up studies on regenerating codes. In \cite{SP14}, the repair process for Reed-Solomon codes was studied. A clever idea was introduced by Guruswami and Wootters \cite{SP14} using the trace function to repair any single erasure for Reed-Solomon codes, where one only needs to download partial information from other helper nodes instead of downloading the whole date in those nodes. Thus the repair bandwidth can be reduced. This idea was further generalized by Tamo et al. \cite{Tamo}. The problem of recovering multiple erased  nodes for RS codes was also considered in \cite{Hoang}.

However, classical regenerating codes are still limited in addressing the hierarchical nature of data centers. In modern practical data centers, storage nodes are usually organized into equally sized groups, which is called racks. Compared with the intra-rack communication cost, the cross-rack communication cost is much more expensive. Therefore, within each group nodes can communicate freely without taxing the system bandwidth and only the information transmission from other racks counts. 
 We can see that regenerating codes can be considered as a special case in this rack-based model where each rack only contains one node. If each rack contains more than one node, then the model distinguishes the communication costs between intra-rack and cross-rack. Thus, classical regenerating codes can not minimize the cross-rack repair bandwidth.
Therefore, exploring the trade-off between storage overhead and cross-rack repair bandwidth is of great interests in rack-aware model. It is easy to observe that the trade-off of rack-aware regenerating codes can be degenerated to that for regenerating codes if each rack only has one node.


\subsection{System model}
Assume a file of size $M$ is divided into $k$ blocks and then encoded  using an error correcting code $\cc$ to  $t n$ symbols in a finite filed $F$. In the rack-based model, we assume that the system center consists of $n$ nodes which are equally divided into racks of size $u$. Thus there are $r$ racks in total (here we assume $ur=n$ ). If one node fails, we can repair it by connecting at most $d$ racks ($d\le r-1$). Those racks participating in the repair procedure is called {\it helper racks} and $d$ is called the {\it repair degree}. The rack where a node fails is called the {\it host rack}.

A single node repair works as follows. Let $X_{h,i}$ be the $i$-th node in the $h$-th rack, $h=1,\cdots,r$, $i=1,\cdots,u$. Without loss of generality, assume that the data in $X_{1,1}$ is erased. We select a special node $X_{h,1}$ which is called a {\it relayer} of rack $h$. A relayer can connect all the data in all the available nodes in the same rack. Hence, if a data collector connects to a relayer, then it connects to all the other nodes in the same rack. The repair procedure is to generate a new node to store the lost data in $X_{1,1}$ in the host rack. The repair process has two steps. In the first step, we select any $d$ helper racks and collect the relayers of these racks. In the second step, we collects data from all the other surviving nodes  in rack 1, i.e., $X_{1,2},\cdots, X_{1,u}$. Then we can regenerate the lost data in $X_{1,1}$ by all the data from the $u-1$ nodes in rack 1 and from the other $d$ ralayers from $d$ helper racks.

In this rack-based model, we can ignore the intra-rack communication cost and specifically focus on the minimization of cross-rack repair bandwidth (i.e., the total amount of information downloaded from the other racks during a repair process).
Furthermore the nodes within the same  rack can collaborate locally before being downloaded.


\subsection{Known results}

The rack model was previously discussed in \cite{PY13,CB18,TC14,HL18,Hu17}. Though the communication costs between intra-rack and cross-racks are distinguished, the system model are not always the same. 
Double regenerating codes (DRC) was proposed by Hu et al. \cite{Hu16} to minimize the cross-rack repair bandwidth assuming that the minimum storage is achieved . The idea is rebuilding the failed symbol partially within each rack and then combine them to recover the symbol across racks. It was illustrated  that the cross-rack repair bandwidth of double regenerating codes can be less than that of regenerating codes for some code parameters. In fact, DRC are a special example of  MSRR codes with all the other $r-1$ racks being helper racks to repair a failure node. The minimum storage codes and minimum bandwidth code were also investigated in \cite{PAM18}. But the parameter $k$ must be a multiple of the number of nodes in each rack in their model.

In a recent paper by Hou et al. \cite{HL18}, this model was studied both for minimum-storage (MSR) and minimum bandwidth (MBR) scenarios while they have more flexility on the choice of parameters. It was also shown that there exists codes with optimal repair bandwidth for a wide-range of parameter. Similar to regenerating codes, exploring the tradeoff between storage redundancy and cross-rack repair bandwidth is an interesting problem. Until now, few explicit constructions of MSR codes are known for this model. A construction of minimum storage rack-aware regenerating (MSRR) codes was given in  \cite{Hu16}, where the underlying field is of size at most $n^2/u$ ($n$ is the block length, $u$ is the size of the rack). Very recently, Chen and Barg \cite{CB19} constructed MSRR codes for all parameters by designing suitable parity-check equations. Furthermore, they also proposed a construction of MSRR codes from RS codes. However, the alphabet size of RS codes given in \cite{CB19} is exponential in the code length $n$. Thus reducing the alphabet size for MSRR codes which can achieve the cut-set bound is a challenging task.


\subsection{Our contributions and techniques}

The main contribution of this paper is to provide explicit construction of several classes of MSRR codes achieving the cut-set bound with the alphabet size  linear in $n$.
The  result and technique of this paper can be summarized as follows.
\begin{itemize}
\item[1.] Utilizing punctured Reed-Solomon codes, we provide a new general framework on repairing Reed-Solomon codes in the rack-aware model where one can repair a failed node with the help of any other $d$ helper racks. It turns out that this repairing scheme is degenerated to a repairing scheme for regenerating codes if each rack contains only one node. We show that many of previous repairing schemes for Reed-Solomon codes are examples of our general framework. Furthermore, we also show that Chen and Barg's \cite{CB19} repairing scheme for Reed-Solomon codes in the rack-aware model can be  realized under our general framework.
\item[2.] We present three classes of Reed-Solomon codes in the rack-aware model with repairing schemes achieving the rack-aware cut-set bound and the field size is linear in the length of these codes. In order to do so, we employ good polynomials introduced in \cite{LRC1} to design the Reed-Solomon codes. Good polynomials, as a class of well known polynomials, are widely used in the constructions of optimal locally recoverable codes \cite{LRC1,LRC2,LRC3} and symmetric cryptography \cite{ME1}. In this paper, we first use good polynomials to design repairing schemes for Reed-Solomon codes in the rack-aware model which can deduce the cross-rack repair bandwidth. Due to use of good polynomials, the degree of the polynomial  that we design to repair the failed node is very high and exceed the field size of Reed-Solomon codes. To obtain a codeword in the dual code of a Reed-Solomon code, we have to reduce a higher degree polynomial to a lower degree polynomial via polynomial degree decent method. This is a fresh new idea that people have never used in this topic. This is also why we can get a code with length linear in the field size. Notably, the alphabet size of Reed-Solomon codes from Chen and Barg's construction is exponential in length $n$, i.e,  $\Omega(\exp({n^n}))$. 
\end{itemize}
In summary, in this paper, we (i) provide a framework to construct MSRR codes; (ii) explore the idea of good polynomials in our construction; (iii) make use of degree decent method.

\subsection{	Organization of this paper}
The rest of the paper is organized as follows. Section II provides some backgrounds on finite fields, Reed-Solomon codes and rack-aware regenerating codes. In section III, we provide a general framework on repairing Reed-Solomon codes in the rack-aware model. In section IV, we run several examples of RS codes under the general framework. In the last section, we employ some good polynomials satisfying the conditions required in the general framework to obtain repairing scheme for RS codes meeting the rack-aware cut-set bound.

\section{Preliminaries}

In this section, we recall some relevant definitions and notations used in the following sections.

\subsection{Background on finite fields}
Denote by $[n]$ the set $\{1,2,\dots,n\}$. Let $p$ be a prime power and $\Ff_p$ a finite filed with $p$ elements. Assume that $\F_q/\F_p$ is a field extension with $[\F_q:\F_p]=t$.
The trace function is a  map from $\F_q$ to $\F_p$ defined by
\[\Tr_{\F_q/\F_p}(x)=x+x^p+x^{p^2}+\cdots+x^{p^{t-1}}.\]
We simply denote $\Tr_{\F_q/\F_p}$  by $\Tr$ if there is no confusion in the context.
For any given $\Ff_p$-basis $\{\eta_1,\eta_2,\cdots,\eta_t\}$ of $\Ff_q$, it is well known that there exists a dual $\Ff_p$-basis $\{\theta_1,\theta_2,\cdots,\theta_t\}$ (see \cite{Nie}), i.e,
$$
\Tr(\eta_i\theta_j)=\left\{
            \begin{array}{ll}
              1,& \hbox{for\ $i=j$,}\\
              0,& \hbox{for\ $i\neq j$.}
            \end{array}
          \right.
$$

Thus, one can express any element of $\Ff_q$ by a linear combination of trace functions. Precisely speaking, for any element $\alpha\in\Ff_q$, let $\alpha=\sum_{i=1}^ta_i\theta_i$ with $a_i\in\Ff_p$. Then we have
$$
\Tr(\alpha\eta_j)=\sum_{i=1}^ta_i\Tr(\theta_i\eta_j)=a_j,
$$
for any $j\in[t]$, i.e,
\begin{equation}\label{S21}
\alpha=\sum_{i=1}^t\Tr(\alpha\eta_i)\theta_i.
\end{equation}
The above trace representation plays a key role in our framework of repairing the failed node.

For a subset $S=\{\gamma_1,\gamma_2,\cdots,\gamma_n\}$ of $\Ff_q$, denote by ${\rm Span}_{\Ff_p}\{\gamma_1,\gamma_2,\cdots,\gamma_n\}$ the linear span over $\Ff_p$ generated by $S$, i.e.,
$$
{\rm Span}_{\Ff_p}\{\gamma_1,\gamma_2,\cdots,\gamma_n\}=\left\{\sum_{z\in S}a_zz:a_z\in \F_p\right\}.
$$
It is easy to see that ${\rm Span}_{\Ff_p}\{\gamma_1,\gamma_2,\cdots,\gamma_n\}$ and ${\rm Span}_{\Ff_p}\{\alpha\gamma_1,\alpha\gamma_2,\cdots,\alpha\gamma_n\}$ have the same dimension over $\Ff_p$ for any nonzero element $\alpha\in\Ff_q$.

Suppose that $V$ is an $\Ff_p$-subspace of $\Ff_q$. Let $L_V(x)$ be the $p$-linearized polynomial defined by
$$
L_V(x)=\prod_{\beta\in V}(x-\beta).
$$
Then $L_V(x)$ is an $\Ff_p$-linear map from $\Ff_q$ to $\Ff_q$ given by $\alpha\mapsto L_V(\alpha)$. Clearly, the kernel of $L_V(x)$ is $V$ and the image ${\rm Im}(L_V)$ forms an $\Ff_p$-subspace of $\Ff_q$ with dimension equal to $t-{\rm dim}_{\Ff_p}(V)$.

\subsection{Background on Reed-Solomon codes}

Let $\cc$ be an $[n,k,d]$ linear code over $\Ff_q$ with dimension $k$ and minimum Hamming distance $d$. The linear code $\cc$ is called an MDS code if $n=k+d-1$. For a $q$-ary $[n,k,d]$ linear code $\cc$ , the dual of $\cc$ is defined by the set
$$
\cc^{\bot}=\left\{(x_1,x_2,\cdots,x_n)\in\Ff_q^n:\sum_{i=1}^nc_ix_i=0,\ {\rm for\ all }\ (c_1,c_2,\cdots,c_n)\in\cc\right\}.
$$

Let $\alpha_1,\alpha_2,\cdots,\alpha_n$ be $n$ distinct elements of $\Ff_q$, where $1<n\leq q$. For $n$ nonzero fixed elements $v_1,v_2,\cdots,v_n$ of $\Ff_q$ ($v_i$ may not be distinct), the generalized Reed-Solomon code \cite{Mac} associated with $\mathbf{a}=(\alpha_1,\alpha_2,\cdots,\alpha_n)$ and $\mathbf{v}=(v_1,v_2,\cdots,v_n)$ is defined by
\begin{equation}\label{eq1}
\GRS_k(\mathbf{a},\mathbf{v})=\left\{(v_1f(\alpha_1),v_2f(\alpha_2),\cdots,v_nf(\alpha_n)):f(x)\in\Ff_q[x], \ {\rm deg}(f(x))\leq k-1\right\}.
\end{equation}
The code $\GRS_k(\mathbf{a},\mathbf{v})$ is a $q$-ary $[n,k,n-k+1]$-MDS code \cite[Th. 9.1.4]{Mac}. If $\mathbf{v}=(1,1,\cdots,1)$, then the generalized Reed-Solomon code is termed as a  Reed-Solomon code. The elements $\alpha_1,\alpha_2,\cdots,\alpha_n$ are called the evaluation points of $\GRS_k(\mathbf{a},\mathbf{v})$. It is well known that the dual of $\GRS_k(\mathbf{a},\mathbf{1})$ is $\GRS_{n-k}(\mathbf{a},\mathbf{u})$ \cite{Jin}, where $\mathbf{1}$ stands for the all-one row vector of length $n$ and $\mathbf{u}=\{u_1,u_2,\cdots,u_n\}$ with $u_i=\prod_{1\leq j\leq n,j\neq i}(\alpha_i-\alpha_j)^{-1}$ for $1\leq i\leq n$.

\subsection{Problem formulation}

Assume that a file of size $M=tk\log p$ is divided into $k$ blocks  and each block stores an element of $\F_q$. Then these $k$ blocks are  encoded using an MDS code  $\mC$  of length $n$  with each node  storing an element of $\F_q$.   The information in each node can be considered as an element of $\F_q$ that is a $t$-dimensional vector over $\F_p$. We suppose that the encoded erasure code is an MDS code in this paper, i.e., it  achieves the minimum storage property.

In the rack-aware model, we further assume that the nodes are organized into groups of size $u$. Thus there are $r=n/u$ racks in total ($n$ is a multiple of $u$). Assume that the transmission cost among the nodes in the same rack is negligible, i.e., cross-rack bandwidth is the major concern. Data within a rack can be computed before being sent to the failure node. To repair a failed node in a rack, the system downloads data from the other nodes in the host rack as well as that in other arbitrary $d$ ($d\le r-1$) helper racks. The information within each rack can be processed before sending for repairing the failed node. Then the failed node can be regenerated by downloading $s$ symbols of $\F_p$ in each of the $d$ help racks as well as $(u-1)t$ symbols in the host rack. Thus, the cross-rack repair bandwidth is $ds\log p$. An encoding scheme that satisfies all the above parameters $n,k,d,t,s$ is called a {\it rack-based storage system} $RSS(n,k,r,d,t,s)$. If such a scheme satisfies the equality in \cite[Theorem 1]{HL18}, then it is called a {\it rack-aware regenerating code  $RRC(n,k,r,d,a,b)$}, where $a=\log q$ is data size stored in each node and $b$ is the bandwidth required to repair a failure node.

In this paper, we only consider the exact repair of rack-aware regenerating codes for one failed node. Furthermore,  the encoded erasure code is taken to be an MDS code in this paper, i.e., it achieves the minimum storage property.  An MSRR code  $RRC(n,k,r,d,a,b)$ is called {\it homogeneous}  if $kr/n$ is an integer, otherwise it is called  {\it hybrid}. As a homogeneous MSRR code can be obtained via a MSR code, we focus on the construction of hybrid MSRR codes in the following context.

  The rack-aware version of the cut-set bound for  rack-aware regenerating codes   is given in \cite{Hu16} through the information flow graph.
  It says that, to repair a failure node for a rack-aware regenerating code  $RRC(n,k,r,d,a,b)$, one has to download  $b$ bits from helper racks with
\[b\geq\frac{d\log q}{d-\left\lfloor kr/n\right\rfloor+1}.\]
In the sequel, we call this bound the rack-aware cut-set bound. If $u=n/r=1$, this bound is degenerated to the cut-set bound for MSR codes.

\section{A general repair framework for Reed-Solomon codes}
In this section, we present a general framework to repair Reed-Solomon codes in a rack-based storage system.

For any two positive integers $i\leq j$, we denote $[i,j]=\{i,i+1,\cdots,j\}$. Assume that $p$ is a prime power and $q=p^t$.

Let $\mathbf{a}=(\alpha_{1,1}, \alpha_{1,2},\dots, \alpha_{1,u},\dots, \alpha_{r,1}, \alpha_{r,2},\dots, \alpha_{r,u})\in\F_q^n$ with $n=ur$ and $\alpha_{i,j}$ being distinct elements of $\F_q$.
Let $\GRS_k(\mathbf{a},\mathbf{1})$ be an $[n=ur,k]$ Reed-Solomon code over $\Ff_q$. Recall that in a rack-based system model, $n$ nodes are divided into $r$ racks. Each codeword $(f(\alpha_{1,1}), f(\alpha_{1,2}),\dots, f(\alpha_{1,u}),\dots, f(\alpha_{r,1}), f(\alpha_{r,2}),\dots, f(\alpha_{r,u}))$ of $\GRS_k(\mathbf{a},\mathbf{1})$ with $f\in\F_q[x]_{<k}$  can be equally divided into $r$ racks, with $u$ nodes in each rack as follows:
\begin{equation}\label{eq:I}
\begin{split}
{\rm Rack}\ 1:& \; f(\alpha_{1,1}),f(\alpha_{1,2}),\cdots,f(\alpha_{1,u}),\\
{\rm Rack}\ 2:&\;  f(\alpha_{2,1}),f(\alpha_{2,2}),\cdots,f(\alpha_{2,u}),\\
&\; \ \ \ \ \ \ \ \ \ \ \ \ \vdots\\
{\rm Rack}\ r:&\;  f(\alpha_{r,1}),f(\alpha_{r,2}),\cdots,f(\alpha_{r,u}).
\end{split}
\end{equation}
This means the data at position $(j,i)$ is the evaluation of the polynomial $f(x)$ on the evaluation point $\alpha_{j,i}$.  Now we give a linear repairing scheme for Reed-Solomon codes in the rack-based model. 

\begin{thm}\label{thm1}Assume that the Reed-Solomon code $\GRS_k(\mathbf{a},\mathbf{1})$ is encoded for a rack-based system model with $r$ racks.
Let $\{\eta_1,\eta_2,\cdots,\eta_t\}$ be a basis of $\Ff_q$ over $\Ff_p$. Let $1\le d\le r-1$.
 For a given pair $(s,j)\in [r]\times[u]$, if there exist $t$ polynomials $h_a(x)\in\Ff_q[x]$ of degree at most $u(d+1)-k-1$ such that $h_a(\alpha_{s,j})=\eta_a$ for any $a\in[t]$, then we can repair the $j$-th node of the $s$-th rack with the help of any other $d$ racks and the cross-rack repair bandwidth is given by
$
b={\rm max}_{S\subseteq[r]\setminus \{s\},|S|=d}\sum_{i\in S}b_i\log p,
$
where
$$
b_i={\rm dim}_{\Ff_p}\left({\rm Span}_{\Ff_p}\left\{\left(h_a(\alpha_{i,1}),h_a(\alpha_{i,2})\dots, h_a(\alpha_{i,u})\right):a=1,2\cdots,t\right\}\right).
$$
\end{thm}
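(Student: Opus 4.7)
The plan is to adapt the Guruswami--Wootters trace trick to a punctured version of the Reed--Solomon code. First, I use the trace-representation identity \eqref{S21} to write
\[
f(\alpha_{s,j}) \;=\; \sum_{a=1}^{t}\Tr\bigl(f(\alpha_{s,j})\eta_a\bigr)\,\theta_a,
\]
so that the repair problem is reduced to recovering the $t$ scalars $\Tr(f(\alpha_{s,j})\eta_a)\in\F_p$, one for each $a\in[t]$.

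Next, fix any $S\subseteq[r]\setminus\{s\}$ with $|S|=d$, and let $A$ consist of the $u(d+1)$ evaluation points coming from the host rack $s$ together with the $d$ helper racks in $S$. The existence of the interpolants $h_a$ forces $u(d+1)-k-1\ge 0$, so the punctured code $\GRS_k(\mathbf{a},\mathbf{1})|_A$ is again an RS code of dimension $k$, whose dual consists of the vectors $(g(\alpha)\lambda_\alpha)_{\alpha\in A}$ with $\deg g\le u(d+1)-k-1$, where $\lambda_\alpha=\prod_{\beta\in A\setminus\{\alpha\}}(\alpha-\beta)^{-1}$. Substituting $g=h_a$ into the dual-pairing identity $\sum_{\alpha\in A}f(\alpha)g(\alpha)\lambda_\alpha=0$ and using $h_a(\alpha_{s,j})=\eta_a$ gives
\[
f(\alpha_{s,j})\,\eta_a \;=\; -\,\lambda_{(s,j)}^{-1}\sum_{(i,\ell)\in T} f(\alpha_{i,\ell})\,h_a(\alpha_{i,\ell})\,\lambda_{(i,\ell)},
\]
where $T=A\setminus\{\alpha_{s,j}\}$ indexes the $u-1$ surviving nodes in rack $s$ and all $u$ nodes in each rack of $S$. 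Applying $\Tr$ and exploiting its $\F_p$-linearity splits the right-hand side into a local contribution over rack $s$ (free under the rack-aware model) plus one contribution per helper rack $i\in S$.

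For each $i\in S$, the contribution to $\Tr(f(\alpha_{s,j})\eta_a)$ is
\[
\phi^{(i)}_a(\mathbf{f}_i)\;=\;\sum_{\ell=1}^{u}\Tr\bigl(f(\alpha_{i,\ell})\,h_a(\alpha_{i,\ell})\,\lambda_{(i,\ell)}\lambda_{(s,j)}^{-1}\bigr),
\]
which is an $\F_p$-linear functional on $\mathbf{f}_i:=(f(\alpha_{i,1}),\dots,f(\alpha_{i,u}))\in\F_q^u$. Rack $i$ therefore only needs to transmit a basis of the $\F_p$-span of the $t$ functionals $\{\phi^{(i)}_a:a\in[t]\}$, at a cost of $\dim_{\F_p}\operatorname{Span}_{\F_p}\{\phi^{(i)}_a:a\in[t]\}$ symbols of $\F_p$. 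The key reduction is that, by nondegeneracy of the trace, the map $\mathbf{c}\mapsto\bigl(\mathbf{y}\mapsto\sum_\ell \Tr(y_\ell c_\ell)\bigr)$ is an $\F_p$-isomorphism $\F_q^u\to\operatorname{Hom}_{\F_p}(\F_q^u,\F_p)$. Under this isomorphism, $\phi^{(i)}_a$ corresponds to the coefficient vector $(h_a(\alpha_{i,\ell})\lambda_{(i,\ell)}\lambda_{(s,j)}^{-1})_{\ell\in[u]}$, so the number of $\F_p$-symbols required equals the $\F_p$-dimension of the span of these $t$ vectors. Componentwise multiplication by the nonzero scalars $\lambda_{(i,\ell)}\lambda_{(s,j)}^{-1}$ is an $\F_p$-bijection of $\F_q^u$ and hence preserves $\F_p$-dimension of spans, so one may strip off the $\lambda$ factors and obtain exactly the quantity $b_i$ in the statement.

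Summing over $i\in S$ yields cross-rack bandwidth $\sum_{i\in S}b_i\log p$, and maximizing over $S$ gives the stated formula for $b$. I expect the main subtlety to be the double duality argument in the previous paragraph: one must keep straight that the relevant span is over $\F_p$ (not over $\F_q$), and the componentwise-scaling step is genuinely necessary to match the coefficient vector $(h_a(\alpha_{i,\ell})\lambda_{(i,\ell)}\lambda_{(s,j)}^{-1})_{\ell}$ to the cleaner vector $(h_a(\alpha_{i,\ell}))_{\ell}$ that actually appears in the theorem.
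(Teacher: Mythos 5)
Your proposal is correct and follows essentially the same route as the paper: puncture the code to the host rack plus $d$ helper racks, plug each $h_a$ into the dual-code pairing, apply the trace, separate the (free) intra-rack term from the per-helper-rack terms, and bound each rack's contribution by the $\F_p$-dimension of the span of the relevant coefficient vectors. The only cosmetic difference is bookkeeping of the dual-code scalars: you keep the $\lambda$ weights on the coefficient vectors and then strip them off via a diagonal $\F_p$-bijection, whereas the paper absorbs the weights $v_{i,\ell}/v_{s,j}$ into the data vector $\mathbf{f}_i$ and takes a basis of the unweighted span directly.
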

\begin{proof}
Assume that the $s$-th rack  is the host rack and the data at node $(s,j)$  is erased. We aim to repair $f(\alpha_{s,j})$.

Let $S$ be a subset of $[r]\setminus \{s\}$ of size $d$. Consider the following punctured code of $\GRS_k(\mathbf{a},\mathbf{1})$
$$
\cc_d=\left\{(f(\alpha_{i,j}))_{i\in \{s\}\cup S,\ 1\leq j\leq u}:f(x)\in \Ff_q[x],\ {\rm deg}(f(x))\leq k-1\right\}.
$$
The dual code of $\cc_d$ is
$$
\cc_d^{\perp}=\left\{(v_{i,j}g(\alpha_{i,j}))_{i\in \{s\}\cup S,\ 1\leq j\leq u}:g(x)\in \Ff_q[x],\ {\rm deg}(g(x))\leq u(d+1)-k-1\right\}$$
for some $v_{i,j}\in \Ff_q^*$.  Since ${\rm deg}(h_a(x))\leq u(d+1)-k-1$, we have
$$
0=\sum_{i\in \{s\}\cup S}\sum_{j=1}^uv_{i,j}h_a(\alpha_{i,j})f(\alpha_{i,j}),
$$
for any $a\in[t]$. This gives
\begin{equation}\label{peq1}
\Tr(\eta_af(\alpha_{s,j}))=-\sum_{1\le \ell\le u,\ell\neq j}\Tr\left(\frac{v_{s,\ell}}{v_{s,j}}h_a(\alpha_{s,\ell})f(\alpha_{s,\ell})\right)-\sum_{i\in S}\Tr\left(\sum_{\ell=1}^u\frac{v_{i,\ell}}{v_{s,j}}h_a(\alpha_{i,\ell})f(\alpha_{i,\ell})\right).
\end{equation}
Suppose that $\{\bc_1,\bc_2,\cdots,\bc_{b_i}\}$ is an $\Ff_p$-basis of ${\rm Span}_{\Ff_p}\{\left(h_a(\alpha_{i,j})\right)_{1\leq j\leq u}:a=1,2\cdots,t\}$, where $i\in S$. Then, for any $(i,a)\in S\times [t]$, there exist some $e_{\ell,i,a}\in \Ff_p$ with $1\le \ell\le b_i$ such that
$$
(h_a(\alpha_{i,1}),h_a(\alpha_{i,2}),\cdots,h_a(\alpha_{i,u}))=\sum_{\ell=1}^{b_i}e_{\ell,i,a}\bc_\ell.
$$
It follows from (\ref{peq1}) that
$$
\Tr(\eta_af(\alpha_{s,j}))=\sum_{1\le \ell\le u,\ell\neq j}\Tr\left(\frac{v_{s,\ell}}{v_{s,j}}h_a(\alpha_{s,\ell})f(\alpha_{s,\ell})\right)-\sum_{i\in S}\sum_{\ell=1}^{b_i}e_{\ell,i,a}\Tr(\mathbf{f}_i\cdot\bc_\ell),
$$
where $\mathbf{f}_i=\left(\frac{v_{i,1}}{v_{s,j}}f(\alpha_{i,1}),\frac{v_{i,2}}{v_{s,j}}f(\alpha_{i,2}),\cdots,\frac{v_{i,u}}{v_{s,j}}f(\alpha_{i,u})\right)$ and $\mathbf{f}_i\cdot\bc_\ell$ denotes the inner product of $\mathbf{f}_i$ with $\bc_\ell$. Thus, by downloading
$$
\left\{\Tr\left(\frac{v_{s,\ell}}{v_{s,j}}h_a(\alpha_{s,\ell})f(\alpha_{s,\ell})\right):\; \ell\in[u]\setminus\{j\}, \quad a\in[t]\right\}
$$
from the host rack and $\{\Tr(\mathbf{f}_i\cdot\bc_\ell):\; i\in S,\ell\in[b_i]\}$ from $d$ helper racks, we can regenerate
$\Tr(\eta_af(\alpha_{s,j}))$ for all $a\in[t]$. Hence, $f(\alpha_{s,j})$ is repaired.
It is easy to see that one needs to download at most $\sum_{i\in S}b_i\log p$ bits data in total from $d$ helper racks.
\end{proof}

Note that rack-aware regenerating codes can be reduced to regenerating codes if each rack contains only one node, i.e., $r=n$. In this case, we get a linear repairing scheme for Reed-Solomon codes from Theorem \ref{thm1}.  The ideal of downloading a few elements from a small subfield instead of a single element from $\F_q$ was initially proposed by Guruswami et al. \cite{Guru}. This technique was further extended in \cite{Tamo,JLX}.

\begin{cor}\label{cor11}
Assume that the Reed-Solomon code $\GRS_k(\mathbf{a},\mathbf{1})$ with $\mathbf{a}=(\alpha_1,\alpha_2,\dots,\alpha_n)$ is encoded for a standard model, i.e., a degenerated rack-based system with $n=r$ and $u=1$.
Let $\{\eta_1,\eta_2,\cdots,\eta_t\}$ be a basis of $\Ff_q$ over $\Ff_p$. Let $1\le d\le n-1$. For a given $j\in [n]$, if there exist $t$ polynomials $h_a(x)\in\Ff_q[x]$ of degree at most $d-k$ such that $h_a(\alpha_{j})=\eta_a$ for any $a\in[t]$, then we can repair the $j$-th node with the help of any other $d$ nodes and the bandwidth is
$
b={\rm max}_{S\subseteq[n]\setminus \{j\},|S|=d}\sum_{i\in S}b_i\log p,
$
where
$$
b_i={\rm dim}_{\Ff_p}\left({\rm Span}_{\Ff_p}\left\{h_a(\alpha_i):\; a=1,2\cdots,t\right\}\right).
$$
\end{cor}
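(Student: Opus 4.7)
The plan is to obtain this corollary as an immediate specialization of Theorem \ref{thm1} to the parameter setting $u=1$, so that racks and nodes coincide and the second index in each $\alpha_{i,j}$ becomes trivial. First I would identify the dictionary between the two statements: when $u=1$ we have $r=n$, each rack $i\in[r]$ corresponds to a single node indexed by $\alpha_i:=\alpha_{i,1}$, the host ``rack'' $s$ is the failed node $j$, and the degree bound $u(d+1)-k-1$ in Theorem \ref{thm1} collapses to $d+1-k-1=d-k$, which matches the hypothesis of the corollary. The hypothesis on the interpolating polynomials $h_a(x)$ satisfying $h_a(\alpha_j)=\eta_a$ for $a\in[t]$ is then exactly the hypothesis of Theorem \ref{thm1} transported across this dictionary.

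Next I would walk through the repair formula (\ref{peq1}) in this degenerate case. Since $u=1$, the inner sum $\sum_{1\le \ell\le u,\ \ell\neq j}$ over the remaining nodes of the host rack is empty, so no data needs to be downloaded ``within'' the host rack; the entire burden of repair falls on the $d$ helper nodes. The helper-side term simplifies because each ``rack'' $i\in S$ contributes a vector of length one, namely $(h_a(\alpha_i))$, so the $\F_p$-span whose dimension is $b_i$ becomes $\mathrm{Span}_{\F_p}\{h_a(\alpha_i):a\in[t]\}$, precisely the quantity in the corollary. From \eqref{peq1} one then obtains for every $a\in[t]$ an expression for $\Tr(\eta_a f(\alpha_j))$ as an $\F_p$-linear combination of the traces $\Tr(\mathbf f_i\cdot \bc_\ell)$ with $i\in S$ and $\ell\in[b_i]$.

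To finish, I would invoke the trace representation (\ref{S21}) to recover the field element $f(\alpha_j)=\sum_{a=1}^t\Tr(\eta_a f(\alpha_j))\theta_a$ from the $t$ traces computed above, and count the downloaded symbols: $b_i$ elements of $\F_p$ from helper node $i$, yielding a total of $\sum_{i\in S}b_i\log p$ bits. Maximising over all admissible helper sets $S$ of size $d$ gives the bandwidth formula in the statement. The only step that deserves care is checking that when $u=1$ the ``intra-rack'' contribution genuinely vanishes (rather than picking up a spurious term from node $j$ itself), which is clear because the sum is over $\ell\ne j$ in a single-element index set; beyond this bookkeeping there is no real obstacle, since everything is a direct rewriting of Theorem \ref{thm1}.
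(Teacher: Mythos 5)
Your proposal is correct and takes exactly the approach the paper intends: the paper gives no separate proof for Corollary~\ref{cor11} but simply notes that it is the $u=1$, $r=n$ specialization of Theorem~\ref{thm1}, and your dictionary (degree bound $u(d+1)-k-1 \mapsto d-k$, the empty intra-rack sum, and the length-one vectors $(h_a(\alpha_i))$ making $b_i$ the dimension of $\mathrm{Span}_{\F_p}\{h_a(\alpha_i):a\in[t]\}$) is precisely the correct bookkeeping. You have filled in the routine details the paper elides; there is no gap.
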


\section{Interpreting existing Examples under our general framework}

In this section, we illustrate that the repairing schemes given in \cite{Guru,Tamo,CB19} can be realized under our general framework introduced in Section III.

Let us first look at the two examples given by Guruswami et al. \cite{Guru}. They presented two constructions of RS codes and their corresponding repairing schemes. In the following, we show that the repairing schemes in \cite{Guru} can be realized in the framework of Corollary \ref{cor11}.

\begin{prop}{\rm  \cite[Theorem 1]{Guru}}
Let $p$ be a prime power and $\F_q/\F_p$ be a field extension with $[\F_q:\F_p]=t$. Label $\Ff_q=\{\alpha_1,\alpha_2,\dots,\alpha_q\}$. Suppose that the Reed-Solomon code $\GRS_k(\mathbf{a},\mathbf{1})$ over $\Ff_q$ with $\mathbf{a}=(\alpha_1,\alpha_2,\dots,\alpha_q)$ is encoded for a standard model. If $k\le q(1-1/p)$, then one can repair the $j$-th node with the help of any other $q-1$ nodes and the bandwidth is $(q-1)\log p$.
\end{prop}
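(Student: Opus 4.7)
The plan is to invoke Corollary~\ref{cor11} with $n=r=q$, $u=1$, and $d=q-1$, so that the degree budget for the polynomials $h_a(x)$ becomes $d-k = q-1-k$, and the bandwidth $\sum_{i\neq j} b_i\log p$ equals $(q-1)\log p$ precisely when each $b_i = 1$. Thus the task reduces to constructing, for each $a\in[t]$, a polynomial $h_a\in\F_q[x]$ with $\deg h_a \le q-1-k$, $h_a(\alpha_j)=\eta_a$, and such that for every $i\ne j$ the set $\{h_1(\alpha_i),\dots,h_t(\alpha_i)\}$ spans an $\F_p$-subspace of $\F_q$ of dimension at most one.

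I would take the classical Guruswami--Wootters choice $h_a(x) = \Tr(\eta_a(x-\alpha_j))/(x-\alpha_j)$. The first step is to verify that $h_a$ is a polynomial: expanding via Frobenius, $\Tr(\eta_a(x-\alpha_j)) = \sum_{i=0}^{t-1}\eta_a^{p^i}(x-\alpha_j)^{p^i}$, and every summand is divisible by $x-\alpha_j$ in $\F_q[x]$, with the quotient of degree $p^{t-1}-1 = q/p-1$. The hypothesis $k\le q(1-1/p)$ is exactly the inequality $q/p-1 \le q-1-k$, placing $\deg h_a$ within the allowed range. Evaluating at $x=\alpha_j$: the $i=0$ summand of $\Tr(\eta_a(x-\alpha_j))$ contributes $\eta_a(x-\alpha_j)$, while for $i\ge 1$ each summand is divisible by $(x-\alpha_j)^p$, so after dividing by $x-\alpha_j$ and setting $x=\alpha_j$ we obtain $h_a(\alpha_j)=\eta_a$.

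For the bandwidth, I would set $\beta_i = \alpha_i - \alpha_j \in \F_q^\ast$ for each $i\ne j$, so that $h_a(\alpha_i) = \Tr(\eta_a\beta_i)/\beta_i$. Since $\Tr(\eta_a\beta_i)\in\F_p$, every $h_a(\alpha_i)$ lies in the one-dimensional $\F_p$-subspace $\F_p\cdot\beta_i^{-1}$ of $\F_q$, giving $b_i\le 1$. Conversely, multiplication by $\beta_i$ is an $\F_p$-linear bijection on $\F_q$, so $\{\eta_a\beta_i\}_{a\in[t]}$ is still an $\F_p$-basis of $\F_q$ and the trace cannot vanish identically on it, forcing $b_i=1$. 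Summing over the $q-1$ indices $i\ne j$ (the unique set $S$ of size $d=q-1$) yields cross-rack repair bandwidth exactly $(q-1)\log p$, matching the claim.

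The only mild obstacle is the divisibility check and the correct evaluation $h_a(\alpha_j)=\eta_a$; both are transparent once one writes out the Frobenius expansion of the trace and tracks the lowest-order term. The bandwidth count is immediate once one observes $\Tr(\eta_a\beta_i)\in\F_p$, and the proposition follows directly from Corollary~\ref{cor11}.
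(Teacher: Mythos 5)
Your proposal is correct and follows essentially the same route as the paper: invoke Corollary~\ref{cor11} with the classical Guruswami--Wootters polynomials $h_a(x)=\Tr(\eta_a(x-\alpha_j))/(x-\alpha_j)$, check the degree bound $q/p-1\le q-1-k$ (equivalent to $k\le q(1-1/p)$), evaluate $h_a(\alpha_j)=\eta_a$, and observe that $h_a(\alpha_i)\in\frac{1}{\alpha_i-\alpha_j}\F_p$ forces $b_i\le 1$. You fill in the details that the paper labels ``easy to verify'' (the Frobenius expansion, the divisibility, and the sharpness $b_i=1$), but the underlying argument is identical.
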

\begin{proof}
We assume that data at the $j$-th position is erased. Let $\{\eta_1,\eta_2,\cdots,\eta_t\}$ be a basis of $\Ff_q$ over $\Ff_p$. For any $a\in[t]$, one can take $h_a(x)$ to be the polynomial
\[h_a(x)=\frac{\Tr(\eta_a(x-\alpha_j))}{x-\alpha_j}.\]
It is easy to verify that ${\rm deg}(h_a(x))\leq q-1-k$ and $h_a(\alpha_{j})=\eta_a$ for any $a\in[t]$. Note that
$$
{\rm Span}_{\Ff_p}\left\{h_a(\alpha_i):\; a=1,2\cdots,t\right\}\subseteq \frac{1}{\alpha_i-\alpha_j}\Ff_p
$$
for any $i\neq j$.
By Corollary \ref{cor11}, we can repair the $j$-th node with the help of any other $q-1$ nodes and the total bandwidth is
$
b=(q-1)\log p,
$
\end{proof}

Now we consider the second example given in \cite{Guru}.
\begin{prop}{\rm  \cite[Theorem 10]{Guru}} Let  $n\le 2(2^s-1)$ be an even number.
Assume that $q=2^{2s}$ and the evaluation points set $\{\alpha_1,\cdots,\alpha_n\}$  is consisted of $n/2$ points from $\Ff_{2^s}^*$ and $n/2$ points from $\beta\Ff_{2^s}^*$, where $\beta$ is a primitive element of $\Ff_q$. If  $k\leq n-2$, then the Reed-Solomon code $\GRS_k(\mathbf{a},\mathbf{1})$ over $\Ff_q$ admits a linear exact repair scheme for a standard model with bandwidth $s(\frac{3n}{2}-2)$.
\end{prop}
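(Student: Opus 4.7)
The plan is to apply Corollary~\ref{cor11} with $p = 2^s$ and $t = 2$, viewing $\F_q = \F_{2^{2s}}$ as a quadratic extension of $\F_{2^s}$ with $\F_{2^s}$-basis $\{\eta_1, \eta_2\} = \{1, \beta\}$. The task then reduces to constructing, for each potentially failed node $\alpha_j$, two polynomials $h_1(x), h_2(x) \in \F_q[x]$ of degree at most $n-1-k$ satisfying $h_a(\alpha_j) = \eta_a$ for $a = 1, 2$, such that the total $\sum_{i\ne j} b_i$ equals $3n/2 - 2$, where $b_i = \dim_{\F_{2^s}} {\rm Span}_{\F_{2^s}}\{h_1(\alpha_i), h_2(\alpha_i)\}$. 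The hypothesis $k \le n-2$ guarantees $n-1-k \ge 1$, so I intend to exhibit two polynomials of degree at most one.

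The key idea is to handle the cases $\alpha_j \in \F_{2^s}^*$ and $\alpha_j \in \beta\F_{2^s}^*$ separately, and in both cases to take $h_1(x) \equiv 1$ (contributing $b_i \ge 1$ at every helper) while choosing $h_2$ so that it maps the coset \emph{opposite} to $\alpha_j$ into $\F_{2^s}$ (producing $b_i = 1$ there) and takes values in $\F_q \setminus \F_{2^s}$ on the remaining helpers from the same coset as $\alpha_j$ (forcing $b_i = 2$). Concretely, when $\alpha_j = \beta v_j \in \beta\F_{2^s}^*$, I would take $h_2(x) = x/v_j$, which manifestly sends $\F_{2^s}^*$ into $\F_{2^s}^*$ while $h_2(\alpha_i) = \beta v_i / v_j \notin \F_{2^s}$ for $\alpha_i = \beta v_i \in \beta\F_{2^s}^* \setminus \{\alpha_j\}$. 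When $\alpha_j \in \F_{2^s}^*$, I would first write $\beta^{-1} = d_1 + d_2 \beta$ with $d_1, d_2 \in \F_{2^s}$ and $d_2 \ne 0$ (possible because $\beta \notin \F_{2^s}$), and set $h_2(x) = d_1/d_2 + \beta^{-1} x / (d_2 \alpha_j)$; a direct characteristic-two calculation yields $h_2(\alpha_j) = \beta$, and because $\beta^{-1} \cdot \beta v = v \in \F_{2^s}$ for all $v \in \F_{2^s}$, one sees that $h_2$ sends $\beta\F_{2^s}^*$ into $\F_{2^s}$, whereas on $\F_{2^s}^*$ the coefficient of $\beta$ in the expansion of $h_2(\alpha_i)$ equals $\alpha_i/\alpha_j \ne 0$, so $h_2(\alpha_i) \notin \F_{2^s}$.

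Once these evaluations are in hand, the bandwidth count is immediate: in either case there are exactly $n/2 - 1$ helpers on the same coset as $\alpha_j$ (each yielding $b_i = 2$) and $n/2$ helpers on the opposite coset (each yielding $b_i = 1$), so $\sum_{i\ne j} b_i = 2(n/2 - 1) + n/2 = 3n/2 - 2$, and Corollary~\ref{cor11} delivers the repair bandwidth $s(3n/2 - 2)$. The real work is not in the degree bound, which is trivially satisfied since both polynomials are linear, but in finding a single degree-one $h_2$ that cleanly segregates the two cosets; the critical fact enabling this is the identity $\beta^{-1} \cdot \beta\F_{2^s} = \F_{2^s}$, which is exactly why multiplying by $\beta^{-1}$ is the right device to flip the roles of the two cosets in the $\alpha_j \in \F_{2^s}^*$ case.
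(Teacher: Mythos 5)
Your proof is correct and follows essentially the same approach as the paper: take $h_1\equiv 1$ and a degree-one $h_2$ that acts by (scaled) multiplication by $\beta^{-1}$ or by the identity according to which coset contains $\alpha_j$, then count $b_i=1$ on the opposite coset ($n/2$ helpers) and $b_i=2$ on the same coset ($n/2-1$ helpers) to get $3n/2-2$. The only difference is that you pin the basis $\{1,\beta\}$ down in advance and therefore need an affine correction to force $h_2(\alpha_j)=\beta$, whereas the paper exploits the freedom in Corollary~\ref{cor11} to choose $\eta_a:=h_a(\alpha_j)$ after the fact, simply taking $h_2(x)=x$ or $\beta^{-1}x$ and observing that $\{1,h_2(\alpha_j)\}$ is automatically an $\F_{2^s}$-basis of $\F_q$.
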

\begin{proof}
Define $h_1(x)=1$ and
$$
h_2(x)=\left\{
            \begin{array}{ll}
              x,& \hbox{if\ $\alpha_j\in\beta\Ff_{2^s}^*$,}\\
             \beta^{-1}x,& \hbox{if $\alpha_j\in\Ff_{2^s}^*$,}
            \end{array}
          \right.
$$
where the $j$-th position is erased. It is easy to see that ${\rm deg}(h_a(x))\leq n-k-1$. If $\alpha_j\in\Ff_{2^s}^*$, then $h_1(\alpha_j)=1$ and $h_2(\alpha_j)=\beta^{-1}\alpha_j$. Note that $\{1,\beta^{-1}\alpha_j\}$ is a basis of $\Ff_{2^{2s}}$ over $\Ff_{2^s}$. It follows from Corollary \ref{cor11} that one can repair the $j$-th node with the help of the other $n-1$ nodes and the bandwidth is $(\frac{3n}{2}-2)\log 2^s$. The case where $\alpha_j\in\beta\Ff_{2^s}^*$ can be proved similarly.
\end{proof}

Due to some limitations of RS codes, it seems very difficult to propose a linear repair scheme for RS codes under a standard model that meets the cut-set bound with equality. To our best of knowledge, there is only one construction of RS codes achieving the cut-set bound \cite{Tamo}. Again we can show that  RS codes in \cite{Tamo} can be repaired under our general framework of Corollary \ref{cor11}.

\begin{prop}{\rm  \cite[Theorem 7]{Tamo}}
Let $n>d>k$ be positive integers. Then there exists a RS code meeting the cut-set bound for the repair of any single node from any $d$ helper nodes. Furthermore, the alphabet size of the RS code is $\Omega(\exp(n^n))$.
\end{prop}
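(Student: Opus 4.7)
The plan is to invoke Corollary \ref{cor11} with $u=1$ and $r=n$, so that the rack-aware cut-set bound specialises to the standard MSR cut-set bound $b\ge d\log q/(d-k+1)$. Write $s=d-k+1$. According to the corollary, the bandwidth meets this bound with equality provided we can exhibit an evaluation set $\{\alpha_1,\ldots,\alpha_n\}\subseteq\F_q=\F_{p^t}$, with $s\mid t$, and for every node index $j\in[n]$ a family of polynomials $h_1^{(j)},\ldots,h_t^{(j)}\in\F_q[x]$ of degree at most $d-k$ satisfying $h_a^{(j)}(\alpha_j)=\eta_a$ for a prescribed $\F_p$-basis $\{\eta_1,\ldots,\eta_t\}$ of $\F_q$, and such that for every $i\ne j$ the vectors $\bigl(h_1^{(j)}(\alpha_i),\ldots,h_t^{(j)}(\alpha_i)\bigr)$ span an $\F_p$-subspace of $\F_q$ of dimension at most $t/s$. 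Granting these data, the quantities $b_i$ appearing in Corollary \ref{cor11} are each at most $t/s$, so the total bandwidth over $d$ helpers is at most $d(t/s)\log p=d\log q/(d-k+1)$, matching the cut-set bound.

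The substance of the theorem is therefore the existence of such a configuration. I would translate the requirements into the following design problem: for each $i\in[n]$ choose an $\F_p$-subspace $V_i\subseteq\F_q$ of dimension $t/s$, and then seek polynomials $h_a^{(j)}$ of degree $\le d-k$ with $h_a^{(j)}(\alpha_i)\in V_i$ for all $i\ne j$ and $h_a^{(j)}(\alpha_j)=\eta_a$. Equivalently, for each $j$ the $\F_p$-linear evaluation map
$$
\mathrm{ev}_j:\F_q[x]_{<s}\longrightarrow \F_q\oplus\bigoplus_{i\ne j}(\F_q/V_i),\qquad h\mapsto\bigl(h(\alpha_j),\,(h(\alpha_i)+V_i)_{i\ne j}\bigr)
$$
must have image containing $\F_q\oplus 0$. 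The degrees of freedom are the evaluation points $\alpha_i$ and the subspaces $V_i$; they must be chosen so that all $n$ of these surjectivity conditions hold simultaneously.

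A natural route is induction on $n$, starting from a trivial base case (essentially $n=d+1$) and adjoining one new pair $(\alpha_n,V_n)$ at each step inside a suitably enlarged extension $\F_{q'}$ of the previous ambient field. Each inductive step introduces $n$ fresh simultaneous surjectivity conditions (one per $j\in[n]$), and the cleanest way to satisfy them is to take $q'$ large enough that a generic choice of $V_n$ and $\alpha_n$ works; one must then re-verify both these new conditions and the preservation of the old ones, and in particular check that all polynomials $h_a^{(j)}$ (including the lifts of previously constructed ones) still have degree at most $d-k$ over $\F_{q'}$. This verification is the main obstacle, and it forces the alphabet to grow by an iterated-exponential factor at each stage: each enlargement essentially raises the field size to a power proportional to the current $n$, and iterating $n$ times yields a tower whose degree over $\F_p$ satisfies $q=\Omega(\exp(n^n))$. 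Once the inductive construction is in place, plugging the resulting $\alpha_i$ and $h_a^{(j)}$ into Corollary \ref{cor11} delivers the claimed RS code together with a repair scheme meeting the cut-set bound.
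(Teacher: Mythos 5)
Your high-level reading of the statement is correct: it is indeed an instance of Corollary~\ref{cor11} with $u=1$, $r=n$, and the game is to produce, for each failed index $j$, polynomials $h_a^{(j)}$ of degree at most $d-k$ that evaluate to a basis at $\alpha_j$ and land in low-dimensional subspaces at the helper points. The bandwidth accounting you perform ($b_i\le t/s$ forces $b\le d\log q/s$, hence equality with the cut-set bound) is also fine. But the construction you sketch diverges from the paper's and, as written, has a genuine gap.

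The central idea of the paper's proof (following Tamo--Ye--Barg) is that the ``small field'' is \emph{not} a single fixed $\F_p$: it changes with the failed node. Using Dirichlet's theorem one picks $n$ distinct primes $\ell_1,\dots,\ell_n$ with $\ell_i\equiv 1\pmod s$, takes $\alpha_i\in\overline{\F}_p$ of degree $\ell_i$, and sets $F_j=\F_p(\alpha_i: i\neq j)$ and $\F_q$ a degree-$s$ extension of $F=\F_p(\alpha_1,\dots,\alpha_n)$. The crucial structural fact is that every helper point $\alpha_i$ ($i\neq j$) already \emph{lies in} $F_j$, so for the explicit monomials $h_a(x)=\beta_v x^c$ (with $\beta_v$ an $F_j$-basis of a subspace $S_j$ furnished by a lemma asserting $S_j+S_j\alpha_j+\dots+S_j\alpha_j^{s-1}=\F_q$), the evaluations $h_a(\alpha_i)=\beta_v\alpha_i^c$ automatically sit inside the $F_j$-subspace $S_j$, giving $b_i=\ell_j$. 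Corollary~\ref{cor11} is applied with base field $F_j$ (hence the factor $\log p_j$ in the paper, not $\log p$). None of this requires $s\mid t$ for a globally fixed $t$; instead $[\F_q:F_j]=s\ell_j$ varies with $j$.

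Your proposal instead fixes $\F_p$, insists $s\mid t$, and chooses subspaces $V_i\subset\F_q$ of dimension $t/s$ once and for all, then hopes an inductive field-enlargement plus ``generic choice'' will realize all $n$ surjectivity conditions simultaneously. This is not merely a repackaging of the paper's argument: the paper never has a single base field, and the key ingredient that makes everything work --- $\alpha_i\in F_j$ for $i\neq j$, together with the subspace-sum lemma --- has no counterpart in your sketch. Your induction is also left at the level of ``one must re-verify...,'' which is precisely the hard step; you offer no mechanism for why a generic $(\alpha_n,V_n)$ in an enlarged $\F_{q'}$ would keep the old conditions intact while creating the new ones, nor why the polynomials of degree $\le d-k$ would continue to exist after the lift. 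So while the reduction to Corollary~\ref{cor11} is right, the construction itself is the entire content of the theorem and your version of it is incomplete; to repair it you would essentially need to rediscover the varying-base-field trick and the monomial/subspace lemma that the paper uses.
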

\begin{proof}
Suppose that $p$ is a prime power and $s=d-k+1$. By Dirichlet's Theorem, there are infinitely many primes $\ell$ with $\ell\equiv 1 \pmod {s}$. Choose distinct primes $\ell_1,\ell_2,\cdots,\ell_n$ such that $\ell_i\equiv1\pmod {s}$ and $\alpha_i\in\overline{\Ff}_p$ such that $[\Ff_p(\alpha_i):\Ff_p]=\ell_i$. Define
$$
F=\Ff_p(\alpha_1,\cdots,\alpha_n),\ \ F_i=\Ff_p(\alpha_j:j\neq i).
$$
Put $p_i=|F_i|=p^{\prod_{j\neq i}l_j}$. Assume that $\Ff_q$ is an extension of $F$ with degree $s$. Then $q=n^{\Omega(n^2)}$. Take the evaluation set of the RS code over $\Ff_q$ to be $\{\alpha_1,\cdots,\alpha_n\}$. Assume that data at the $j$-th position is erased. Due to the existence of an $F_j$-subspace of $\Ff_q$ such that
$$
{\rm dim}_{F_j}S_j=\ell_j,\ \ S_j+S_j\alpha_j+\cdots+S_j\alpha_j^{s-1}=\Ff_q,
$$
one can define the polynomial
$$
h_a(x)=\beta_vx^c,
$$
for any $a=c\ell_j+v$ with $1\leq v\leq \ell_j$ and $0\leq c\leq s-1$ ,where $\beta_1,\cdots,\beta_{\ell_j}$ is an $F_j$-basis of $S_j$. Since $\{\beta_u\alpha_j^v:u\in[\ell_j],v=0,1,\cdots,s-1\}$ is an $F_j$-basis of $\Ff_q$ and ${\rm deg}(h_a(x))\leq s-1=d-k$, by Corollary \ref{cor11}, the $j$-th position can be repaired with the help of any other $d$ nodes by downloading $b={\rm max}_{S\in[n]\setminus \{j\}, |S|=d}\sum_{i\in S}b_i\log p_j$ bits of data, where
$$
b_i={\rm dim}_{F_j}{\rm Span}_{\Ff_j}\left\{h_a(\alpha_i):\; a=1,2\cdots,s\ell_j\right\}={\rm dim}_{F_j}S_j=\ell_j
$$
since $h_a(\alpha_i)=\beta_vx^c\in S_j$. Hence the bandwidth is at most $d\ell_j\log p_j$. The cut-set bound shows that
$$
b\geq\frac{d\log q}{d-k+1}=\frac{d\log p_j^{s\ell_j}}{s}=d\ell_j\log p_j.
$$
Thus, the cut-set bound is achieved.
\end{proof}

Very recently, by modifying the construction of RS codes in \cite[Theorem 7]{Tamo}, Chen and Barg {\rm \cite[Proposition V.2 of Section 5]{CB19}} proposed a construction of RS codes that meet the rack-aware cut-set bound with equality. Now, we interpret their repairing scheme under our general framework of Theorem \ref{thm1}.

Let $q$ be a power of a prime. Suppose that $u$ is the size of the rack with $u\mid(q-1)$. Let $k=mu+v$ and $s=d-m+1$, where $0\leq v\leq u-1$. Assume that $\ell_1,\ell_2,\cdots,\ell_r$ are distinct primes such that $\ell_i\equiv 1\pmod {s}$ and $\ell_i>u$ for any $i\in[r]$. Choose $\alpha_i\in\overline{\Ff}_q$ such that $[\Ff_q(\alpha_i):\Ff_q]=\ell_i$. Define
$$
F=\Ff_q(\alpha_1,\cdots,\alpha_r),\ \ F_i=\Ff_q(\alpha_j:j\neq i).
$$
Then $[F:\Ff_q]=\prod_{i=1}^r\ell_i$ and $[F:F_i]=\ell_i$. Put $p_i=|F_i|=q^{\prod_{j\neq i}\ell_j}$. Assume that $K$ is an extension of $F$ with degree $s$. Further, suppose that $\alpha\in\Ff_q$ is an element of multiplicative order $u$. Consider the set of elements
$$
\alpha_{i,j}=\alpha_{i}\alpha^{j-1}, i=1,\cdots,r,\ \ j=1,\cdots,u.
$$

Let $GRS_k(\mathbf{a},\mathbf{1})$ be the generalized RS code with parameters $[n=ur,k]$ over the finite field $K$ defined in the rack-based model \eqref{eq:I}. To state the repair procedure, the following lemma is necessary.

\begin{lem}{\rm \cite{CB19}}\label{lem1}
For any $i\in[r]$, there exists an $F_i$-subspace of $K$ such that
$$
{\rm dim}_{F_i}S_i=\ell_i,\ \ S_i+S_i\alpha_i^u+\cdots+S_i\alpha_i^{u(s-1)}=K,
$$
where the operation $+$ is the Minkowski sum of sets, $T_1+T_2:=\{\beta_1+\beta_2:\beta_1\in T_1,\beta_2\in T_2\}$.
\end{lem}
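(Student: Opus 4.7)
My plan is to construct $S_i$ explicitly by viewing $K$ as an $F_i$-module under multiplication by $\tau := \alpha_i^u$.

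First I would establish the preparatory fact that $F_i(\alpha_i^u) = F_i(\alpha_i) = F$, so that $\tau$ has minimal polynomial $m_\tau \in F_i[x]$ of degree $\ell_i$. This follows because $\ell_i = [F:F_i]$ is prime, so any intermediate field $F_i \subseteq F_i(\alpha_i^u) \subseteq F$ equals one of the endpoints; the collapsed case would force $\alpha_i$ to be a root of $x^u - \alpha_i^u \in F_i[x]$ of degree $u < \ell_i$, contradicting the known minimal polynomial degree. In particular $\{1, \tau, \ldots, \tau^{\ell_i-1}\}$ is an $F_i$-basis of $F$, and $\gcd(s, \ell_i) = 1$ since $\ell_i \equiv 1 \pmod{s}$ with $\ell_i$ prime forces $\ell_i > s$.

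Next I would reformulate operator-theoretically: letting $T$ denote multiplication by $\tau$ on $K$, the desired equality reads $\sum_{j=0}^{s-1} T^j S_i = K$, which---since $\dim_{F_i} K = s\ell_i = s \cdot \dim_{F_i} S_i$---is equivalent to the sum being direct. Fix an $F$-basis $\{e_1, \ldots, e_s\}$ of $K$, so that $\{\tau^a e_j : 0 \le a \le \ell_i - 1,\, 1 \le j \le s\}$ is an $F_i$-basis of $K$, and write $\ell_i = ks + 1$.

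I would then define $S_i$ to be the $F_i$-span of the ``staircase block'' $\{\tau^{ms} e_j : 0 \le m \le k-1,\, 1 \le j \le s\}$ (of size $ks$) together with the single ``diagonal'' vector $v := \sum_{j=1}^s \tau^{ks+1-j} e_j$. The block consists of $ks$ distinct basis vectors, and $v$ is linearly independent of them because each of its terms $\tau^{ks+1-j} e_j$ has exponent $ks+1-j \notin \{0, s, 2s, \ldots, (k-1)s\}$, giving $\dim_{F_i} S_i = ks + 1 = \ell_i$. Applying the shifts $T^0, \ldots, T^{s-1}$ to the block fills every $\tau^a e_j$ with $0 \le a \le ks - 1$, leaving only the $s$ ``top-row'' vectors $\tau^{ks} e_j$ to recover; and the shifts $T^b v = \sum_j \tau^{ks+1-j+b} e_j$, after reducing modulo the covered subspace, yield a lower-triangular $s \times s$ matrix in the basis $(\tau^{ks} e_j)_j$---entries with $j > b+1$ vanish (the exponent $ks+1-j+b < ks$ hits an already-covered vector), the diagonal entries $j = b+1$ equal $1$ (exponent exactly $ks$), and entries with $j < b+1$ are the $\tau^{ks}$-coefficients from reducing higher $\tau$-powers modulo $m_\tau$. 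Hence the matrix has determinant $1$, and the $s$ shifts of $v$ supply precisely the missing directions, yielding $\sum_{b=0}^{s-1} T^b S_i = K$.

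The main obstacle is the bookkeeping for the higher powers $\tau^{ks + c}$ ($c \ge 1$) when reduced modulo $m_\tau$: one must track which basis vectors they contribute to and verify that no collision destroys linear independence. The triangular structure above sidesteps this concern, because the diagonal entry is the ``pristine'' unreduced $\tau^{ks}$ while the strictly-lower-triangular entries involve only reductions of strictly higher powers that never feed back onto the diagonal. Once the index pattern $b_j = ks + 1 - j$ is chosen, the verification becomes automatic and the lemma follows.
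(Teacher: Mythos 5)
The paper states this lemma as a citation to \cite{CB19} and gives no proof of its own; your proof is therefore a self-contained argument rather than a paraphrase, and it appears to be correct.

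Your argument works in two layers. First, you establish that $\tau := \alpha_i^u$ still generates $F$ over $F_i$: since $[F:F_i]=\ell_i$ is prime, $F_i(\tau)$ is $F_i$ or $F$, and $\tau \in F_i$ would force $[F:F_i] = [F_i(\alpha_i):F_i] \le u < \ell_i$, a contradiction (this is exactly where the hypothesis $\ell_i > u$ enters). Second, with the $F_i$-basis $\{\tau^a e_j : 0\le a\le \ell_i-1, 1\le j\le s\}$ of $K$ in hand, you construct $S_i$ as the span of the $ks$ ``staircase'' vectors $\tau^{ms}e_j$ ($0\le m\le k-1$, where $\ell_i = ks+1$) plus the single vector $v=\sum_j \tau^{ks+1-j}e_j$. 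The shifts $T^0,\dots,T^{s-1}$ of the staircase cover precisely the exponents $\{0,\dots,ks-1\}$ (a clean Euclidean-division partition), and the shifts of $v$, reduced modulo that covered space, produce a triangular $s\times s$ matrix with ones on the diagonal in the remaining directions $\{\tau^{ks}e_j\}$. Invertibility of that matrix, together with the dimension count $ks^2 + s = s\ell_i = \dim_{F_i}K$, gives $\sum_{b=0}^{s-1} T^b S_i = K$. All the bookkeeping checks out: the exponents $ks+1-j$ never collide with $\{0,s,\dots,(k-1)s\}$ because $ks+1-s > (k-1)s$, so $\dim_{F_i}S_i = \ell_i$ as required; the entry $M_{j,b}$ vanishes for $j>b+1$, equals $1$ for $j=b+1$, and is harmless for $j<b+1$.

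One cosmetic slip: with rows indexed by $j$ and columns by $b+1$, the condition $M_{j,b}=0$ for $j>b+1$ makes the matrix \emph{upper}-triangular, not lower-triangular as you wrote; the determinant-one conclusion is unaffected. You also state $\gcd(s,\ell_i)=1$ as a preparatory fact but never use it---what you actually use is the stronger arithmetic $\ell_i \equiv 1 \pmod s$ to set $\ell_i=ks+1$. Neither point is a gap. Since the original Chen--Barg/Tamo--Ye--Barg argument is also an explicit subspace construction exploiting $\ell_i\equiv 1\pmod s$, your proof is in the same constructive spirit as the cited source even though the present paper omits it.
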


Assume that the Reed-Solomon code $\GRS_k(\mathbf{a},\mathbf{1})$ is encoded for a rack-based system model with $r$ racks.
\begin{prop}{\rm  \cite[Proposition V.2]{CB19}}\label{prop:4.5}
With the help of any $d$ helper racks, the Reed-Solomon code $\GRS_k(\mathbf{a},\mathbf{1})$ defined above admits optimal repairing scheme with respect to the rack-aware cut-set bound. Furthermore, the alphabet size of the RS code is $\Omega(\exp(n^n))$.
\end{prop}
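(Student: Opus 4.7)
The plan is to apply Theorem \ref{thm1} directly, with the small field taken to be $F_{i_0}$ (rather than $\F_p$) where $(i_0,j_0)$ is the position of the erased node. The candidate repair polynomials I would try are
\[
h_{c,w}(x)=\beta_w x^{uc},\qquad c\in\{0,1,\dots,s-1\},\ w\in[\ell_{i_0}],
\]
where $\{\beta_1,\dots,\beta_{\ell_{i_0}}\}$ is an $F_{i_0}$-basis of the subspace $S_{i_0}\subseteq K$ supplied by Lemma \ref{lem1}. The key structural observation driving the whole calculation is that, because $\alpha\in\F_q$ has multiplicative order $u$, one has $\alpha_{i,j}^{u}=\alpha_i^{u}(\alpha^{j-1})^{u}=\alpha_i^{u}$ for every rack index $i$ and every $j\in[u]$. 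Thus evaluating $h_{c,w}$ at any node in rack $i$ depends only on $i$, not on $j$.

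Next I would verify the three hypotheses of Theorem \ref{thm1}. Firstly, $\deg h_{c,w}=uc\le u(s-1)=us-u\le us-v-1=u(d+1)-k-1$, using $k=mu+v$, $s=d-m+1$, and $v\le u-1$. Secondly, evaluating at the erased node gives $h_{c,w}(\alpha_{i_0,j_0})=\beta_w\alpha_{i_0}^{uc}$, and as $(c,w)$ ranges over its index set these elements form an $F_{i_0}$-basis of $K$ by Lemma \ref{lem1} (the sum of $s$ subspaces each of $F_{i_0}$-dimension $\ell_{i_0}$ equals $K$, whose total $F_{i_0}$-dimension is $s\ell_{i_0}$, so the sum is direct). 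Thirdly, for any helper rack $i\ne i_0$ and any $j\in[u]$,
\[
h_{c,w}(\alpha_{i,j})=\beta_w\alpha_i^{uc},
\]
so the vector $(h_{c,w}(\alpha_{i,1}),\dots,h_{c,w}(\alpha_{i,u}))$ equals $\beta_w\alpha_i^{uc}\cdot\mathbf{1}$. Since $\alpha_i\in F_{i_0}$, we have $\alpha_i^{uc}\in F_{i_0}$, and therefore
\[
\mathrm{Span}_{F_{i_0}}\{\beta_w\alpha_i^{uc}\cdot\mathbf{1}:c,w\}
=\bigl(\mathrm{Span}_{F_{i_0}}\{\beta_w:w\in[\ell_{i_0}]\}\bigr)\cdot\mathbf{1}
=S_{i_0}\cdot\mathbf{1},
\]
which is an $F_{i_0}$-space of dimension $\ell_{i_0}$. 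Hence $b_i=\ell_{i_0}$ for every $i\in S$.

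Summing, the cross-rack bandwidth obtained from Theorem \ref{thm1} is $d\ell_{i_0}\log p_{i_0}$. On the other hand, $\lfloor kr/n\rfloor=\lfloor m+v/u\rfloor=m$, so $d-\lfloor kr/n\rfloor+1=s$, and $\log|K|=s\ell_{i_0}\log p_{i_0}$; the rack-aware cut-set bound therefore reads
\[
b\ge\frac{d\log|K|}{s}=d\ell_{i_0}\log p_{i_0},
\]
matching our scheme exactly. For the alphabet size, $|K|=q^{s\prod_{i=1}^{r}\ell_i}$; since the $\ell_i$ are distinct primes growing at least linearly in $r$ (hence in $n=ur$), $\prod\ell_i$ is at least of order $n^n$, giving the claimed $\Omega(\exp(n^n))$.

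The one point that requires some care—and the main mild obstacle—is that Theorem \ref{thm1} is stated with the ground subfield $\F_p$, whereas here the natural subfield for the trace representation is $F_{i_0}$, which depends on the failed rack. I would either invoke the obvious generalization of Theorem \ref{thm1} to an arbitrary subfield tower $F_{i_0}\subseteq K$ (the proof is identical, replacing $\F_p$ by $F_{i_0}$ and $t$ by $[K:F_{i_0}]=s\ell_{i_0}$), or argue directly by picking an $\F_p$-basis refined from the $F_{i_0}$-basis $\{\beta_w\alpha_{i_0}^{uc}\}$ and showing that the same $b_i\log p_{i_0}$ bits per helper rack suffice. Either way the counting is unchanged and the repair is exact.
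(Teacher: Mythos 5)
Your proof is correct and follows essentially the same route as the paper: same repair polynomials $h_{c,w}(x)=\beta_w x^{uc}$, same degree bound, same invocation of Lemma~\ref{lem1} to get an $F_{i_0}$-basis of $K$, same observation that the evaluations are constant across each helper rack (so the per-rack span is $S_{i_0}\cdot\mathbf{1}$), and the same comparison to the cut-set bound. Your explicit remark that Theorem~\ref{thm1} must be read with the ground subfield $F_{i_0}$ (which varies with the failed rack) in place of a fixed $\F_p$ is a point the paper silently glosses over, and your way of handling it is the natural one.
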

\begin{proof}
Let the index of the host rack be $j$ and the index of the failed node in this rack be $p$, where $1\leq j\leq r$ and $1\leq p\leq u$. Consider the field extension $K$ of $F_j$. Note that $[K:F_j]=s\ell_j$. Let $\beta_1,\cdots,\beta_{\ell_j}$ be an $F_j$-basis of $S_j$. For $a=c\ell_j+v$ with $1\leq v\leq \ell_j$ and $0\leq c\leq s-1$, put $h_a(x)=\beta_vx^{uc}$. Then
\begin{eqnarray*}
{\rm deg}(h_a(x)) &\leq& u(s-1)
=u(d-m)\\
&\leq&u(d+1)-um-v-1=u(d+1)-k-1.
\end{eqnarray*}
Observe that $h_a(\alpha_{j,p})=\beta_v\alpha_j^{uc}$ for any $1\leq v\leq \ell_j$ and $0\leq c\leq s-1$. By Lemma \ref{lem1}, $\{\beta_v\alpha_j^{uc}:1\leq v\leq \ell_j,\ 0\leq c\leq s-1\}$ forms a $F_j$-basis of $K$. It follows from Theorem \ref{thm1} that we can repair the $p$-th node of the $j$-th rack with the help of any other $d$ racks by downloading $b={\rm max}_{S\subseteq[r]\setminus \{j\},|S|=d}\sum_{w\in S}b_w\log p_j$ bits of data from any other $d$ helper racks, where
$$
b_w={\rm dim}_{F_j}\left({\rm Span}_{F_j}\left\{\left(h_{a}(\alpha_{w,j})\right)_{1\leq j\leq u}:a=1,2\cdots,sl_j\right\}\right).
$$
Note that
$$
{\rm Span}_{F_j}\left\{\left(h_{a}(\alpha_{w,j})\right)_{1\leq j\leq u}:a=1,2\cdots,sl_j\right\}\subseteq S_j\cdot(1,1,\cdots,1).
$$
Hence, the cross-rack repair bandwidth is at most $b=d\ell_j\log p_j$. The corresponding rack-aware cut-set bound is
$$
b\geq \frac{d\log |K|}{d-m+1}=\frac{d\log p_j^{sl_j}}{s}=dl_j\log p_j.
$$
Therefore, the rack-aware repairing scheme for any rack $j$ is optimal.
\end{proof}

\section{Reed-Solomon codes meeting the rack-aware cut-set bound}
The Reed-Solomon codes given in Proposition \ref{prop:4.5} have alphabet size exponential in the code length $n$. In this section, we construct several classes of   Reed-Solomon codes achieving the rack-aware cut-set bound that have alphabet size linear in $n$.

Theorem \ref{thm1} provides a framework for repairing RS codes. More precisely, to find a repairing scheme for RS codes with one node failure, one needs to find polynomials $h_a(x)$ satisfying the conditions in Theorem \ref{thm1}. In this section, we show via the above framework that RS codes can be repaired with the cross-rack repair bandwidth meeting the rack-aware cut-set bound.

Firstly let us define what is a good polynomial. Such good polynomials have been  used for constructions of locally repairable codes, see \cite{LRC1}. Here we make use of good polynomials for the construction of MSRR codes.
\begin{lem}\label{lem:4.1}
Assume that $A_1,A_2,\cdots,A_r$ are $r$ pairwise disjoint subsets of $\F_q$ with $|A_i|=u$ for $i=1,2,\dots,r$. Put $n=ur$ and $A_i=\{\Ga_{i,1},\Ga_{i,2},\cdots,\Ga_{i,u}\}$. A polynomial $g(x)\in\F_q[x]$  of degree $u$ is called a good polynomial if  $g(x)$ is equal to 0 in $A_s$ for some $s\in [r]$
and is constant in each $A_i$ for any $i\in[r]\setminus s$.
\end{lem}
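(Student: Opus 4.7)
The statement as printed reads as a definition of the term ``good polynomial'' rather than an assertion, so strictly speaking there is nothing to prove. I will treat it as implicitly asserting the existence of a partition $A_1,\ldots,A_r$ together with a polynomial $g(x)$ of degree $u$ enjoying the two listed properties, and outline an existence proof by explicit construction via multiplicative cosets --- this is the standard recipe for good polynomials in the locally recoverable code literature cited just before the lemma.

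First I would impose the side conditions $u\mid q-1$ and $ur\le q-1$, which ensure that $\F_q^*$ contains a unique cyclic subgroup $H$ of order exactly $u$ and that the quotient $\F_q^*/H$ has at least $r$ distinct cosets. Choosing representatives $\beta_1,\ldots,\beta_r$ of $r$ such cosets, I set $A_i := \beta_i H$ and enumerate $A_i=\{\alpha_{i,1},\ldots,\alpha_{i,u}\}$ in any order. Pairwise disjointness and the size condition $|A_i|=u$ then follow directly from the coset structure of $\F_q^*/H$.

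The key step is then the choice of polynomial: for the distinguished index $s\in[r]$, define
\[
g(x) \;:=\; x^{u} - \beta_{s}^{u},
\]
which has degree exactly $u$. For any $\alpha=\beta_{s}h\in A_{s}$ we compute $g(\alpha)=\beta_{s}^{u}h^{u}-\beta_{s}^{u}=0$, using $h^{u}=1$ in $H$; for $\alpha=\beta_{i}h\in A_{i}$ with $i\neq s$, the same computation gives $g(\alpha)=\beta_{i}^{u}-\beta_{s}^{u}$, which depends only on $i$ and not on the element $h\in H$. Both listed properties are thus satisfied, and by varying $s$ the same coset partition supports a good polynomial for every choice of ``zero rack''.

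The main obstacle I foresee is parameter matching with the eventual RS-code application: the construction above requires $u\mid q-1$ and $ur\le q-1$, which is fine when racks are small relative to $q$ but fails when one wants $ur=q$, i.e., the $A_i$ to exhaust all of $\F_q$. In that regime I would instead work additively, taking an $\F_p$-subspace $V\le(\F_q,+)$ of order $u$, letting the $A_i$ be the additive cosets of $V$, and using the $p$-linearized polynomial $L_V(x)=\prod_{\lambda\in V}(x-\lambda)$ from Section~II; such an $L_V$ is $\F_p$-linear and therefore automatically constant on each additive coset of $V$, giving an analogous good polynomial in the additive setting.
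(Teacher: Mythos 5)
You are correct that this ``lemma'' is in fact a definition and carries no proof in the paper; the paper itself provides none, so there is nothing to compare your argument against. Your extra material, while not required, is accurate and anticipates exactly the constructions the paper uses later: the multiplicative-coset polynomial $g(x)=x^{u}-\beta^{u}$ is precisely what appears in the proof of Theorem~\ref{thm:4.3}, the additive (trace/linearized) variant is what appears in Theorem~\ref{thm:4.2}, and the mixed additive--multiplicative construction of Tamo--Barg is used in Theorem~\ref{thm:4.4}. One small mismatch worth noting: the definition as stated does not require the $A_i$ to partition $\F_q$ or to exhaust it, only to be pairwise disjoint of common size $u$, so your side conditions $u\mid q-1$ and $ur\le q-1$ (or, additively, $u=p^\ell$ for a subspace) are genuine restrictions imposed by the construction rather than by the definition itself; the paper handles this by choosing parameters in each of Theorems~\ref{thm:4.2}--\ref{thm:4.4} so that the relevant divisibility holds.
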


Let $\F_p$ be a subfield of $\F_q$ and let $V$ be an $\F_p$-subspace of $\F_q$. Define  the linearized polynomial $L_V(x)=\sum_{\Ga\in V}(x-\Ga)$. Suppose that $\{\eta_1,\eta_2,\cdots,\eta_t\}$ is a basis of $\Ff_q$ over $\Ff_p$.
Now we use the degree reduction technique to get a general construction.

\begin{thm}\label{thm:4.1} Let $p$ be a prime power and let $q=p^t$ for some integer $t\ge 2$. Let $g(x)\in\F_q[x]$ be a good polynomial defined as lemma \ref{lem:4.1}.
Let $V$ be an $\F_p$-subspace of dimension $\ell$ of $\F_q$ and define the linearized polynomial $L_V(x)=\sum_{\Ga\in V}(x-\Ga)$ and define the polynomial $h_a(x):=\frac{L_V(g(x)\eta_a)}{g(x)} \pmod{x^q-x}.$ If $w\le u(d+1)-k-1$ with $w=\max\{\deg(h_a(x)):\; a\in [t]\}$, then one can repair any node in the $s$-th rack for the generalized RS code $\GRS_k(\mathbf{a},\mathbf{1})$ defined in the rack-based model \eqref{eq:I} with the cross-rack repair bandwidth $d(t-\ell)\log p$ bits data.
\end{thm}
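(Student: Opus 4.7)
The plan is to reduce the assertion to Theorem~\ref{thm1}. Two hypotheses of that theorem must be checked: the degree bound (which is assumed) and the evaluation condition $h_a(\alpha_{s,j})=\eta_a$ up to a basis change; then I compute the per-rack dimensions $b_i$ entering the bandwidth formula. Because $V$ is an $\F_p$-subspace of dimension $\ell$, $L_V(y)=\prod_{\alpha\in V}(y-\alpha)$ is a $p$-linearized polynomial of degree $p^{\ell}$, so
\[
L_V(y)=c_0 y+c_1 y^p+\cdots+c_{\ell-1}y^{p^{\ell-1}}+y^{p^\ell}
\]
for some $c_i\in\F_q$, with $c_0=L_V'(0)\neq 0$ because $L_V$ is separable. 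Substituting $y=g(x)\eta_a$ and dividing by $g(x)$ (an exact division, since every term on the right carries a factor $g(x)^{p^i}$) yields
\[
h_a(x)\equiv c_0\eta_a+\sum_{i=1}^{\ell}c_i\eta_a^{p^i}g(x)^{p^i-1}\pmod{x^q-x}.
\]

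Next, I check the evaluation condition at the host rack. By the definition of a good polynomial, $g(\alpha_{s,j})=0$ for every $j\in[u]$, so the sum collapses and $h_a(\alpha_{s,j})=c_0\eta_a$. Since $c_0\in\F_q^{*}$, the collection $\{c_0\eta_a\}_{a\in[t]}$ is still an $\F_p$-basis of $\F_q$, so the hypothesis of Theorem~\ref{thm1} is met (equivalently, replace $h_a$ by $c_0^{-1}h_a$, which does not change its degree). A pleasant feature is that the same family $\{h_a\}_{a\in[t]}$ handles every node in rack $s$, because $g$ vanishes on all of $A_s$.

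For the bandwidth contribution of a helper rack $i\neq s$, the good-polynomial property says $g$ takes a constant value $\beta_i\in\F_q^{*}$ on $A_i$. Hence $h_a(\alpha_{i,j})=L_V(\beta_i\eta_a)/\beta_i$ does not depend on $j$, and the row vector $(h_a(\alpha_{i,1}),\ldots,h_a(\alpha_{i,u}))$ equals $\beta_i^{-1}L_V(\beta_i\eta_a)\cdot(1,\ldots,1)$. As $a$ ranges over $[t]$, $\{\beta_i\eta_a\}_{a\in[t]}$ is an $\F_p$-basis of $\F_q$, so $\{L_V(\beta_i\eta_a)\}_{a\in[t]}$ spans $\im(L_V)$, which has $\F_p$-dimension $t-\dim_{\F_p}V=t-\ell$ by rank-nullity. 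Thus $b_i=t-\ell$ for every $i\neq s$, and Theorem~\ref{thm1} yields cross-rack bandwidth $\sum_{i\in S}b_i\log p=d(t-\ell)\log p$ bits, as claimed. The only conceptually nontrivial step is the rescaling by $c_0$; past that, everything is bookkeeping on degrees and dimensions, and reducing modulo $x^q-x$ does not disturb any $\F_q$-evaluation.
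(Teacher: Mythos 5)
Your proof is correct and follows essentially the same route as the paper's: invoke Theorem~\ref{thm1} with $h_a(x)=L_V(g(x)\eta_a)/g(x)$, use the good-polynomial property both to evaluate at the host rack and to make the helper-rack vectors constant multiples of the all-ones vector, and bound the span dimension by $\dim_{\F_p}\im(L_V)=t-\ell$. There are two places where you are more careful than the paper. First, the paper asserts flatly that $h_a(\alpha_{s,j})=\eta_a$, which is not literally true: writing $L_V(y)=\sum_{i=0}^{\ell}c_iy^{p^i}$ with $c_\ell=1$, one gets $h_a(\alpha_{s,j})=c_0\eta_a$ with $c_0=L_V'(0)=\prod_{0\neq\beta\in V}(-\beta)\neq 0$. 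Your observation that $c_0\neq 0$ (separability) and that one may either rescale the $h_a$ by $c_0^{-1}$ or replace the basis by $\{c_0\eta_a\}_a$ closes this small gap cleanly, and is a genuine improvement in precision over the paper's phrasing. Second, you establish $b_i=t-\ell$ exactly, not merely $b_i\le t-\ell$ as the paper does; this is not needed for the upper bound (and in the downstream Theorems~\ref{thm:4.2}--\ref{thm:4.4} equality is forced anyway by the cut-set bound), but it makes the claim in the theorem statement that the bandwidth \emph{is} $d(t-\ell)\log p$ self-contained. Everything else — the exact divisibility of $L_V(g(x)\eta_a)$ by $g(x)$, the independence of $h_a(\alpha_{i,j})$ from $j$ on a helper rack, and the rank-nullity count $\dim\im(L_V)=t-\ell$ — matches the paper's argument.
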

\begin{proof}  Define the polynomial $h_a(x):=\frac{L_V(g(x)\eta_a)}{g(x)}$.  It is easy to see that $h_a(\Ga_{s,j})=\eta_a$ for $\Ga_{s,j}\in \F_q$. Since the degree of $h_a(x)$ is at most $w$, by Theorem \ref{thm1} we can repair any node in the $s$-th rack by downloading at most $
b={\rm max}_{S\subseteq[r]\setminus \{s\},|S|=d}\sum_{i\in S}b_i\log p
$ bits data,
where
$$
b_i={\rm dim}_{\Ff_p}\left({\rm Span}_{\Ff_p}\left\{\left(h_a(\alpha_{i,1}),h_a(\alpha_{i,2})\dots, h_a(\alpha_{i,u})\right):a=1,2\cdots,t\right\}\right).
$$
 For any $i\in S$, we have
\begin{eqnarray*}
&&{\rm Span}_{\Ff_p}\left\{\left(h_a(\alpha_{i,1},h_a(\alpha_{i,2}),\dots, h_a(\alpha_{i,u}\right):\; a=1,2\cdots,t\right\}\\
&=&{\rm Span}_{\Ff_p}\left\{\left(\frac{L_V(g(\alpha_{i,1})\eta_a)}{g(\alpha_{i,1})},\cdots,\frac{L_V(g(\alpha_{i,u})\eta_a)}{g(\alpha_{i,u})}\right):a=1,2\cdots,t\right\}\\
&=&{\rm Span}_{\Ff_p}\left\{\left(\frac{L_V(g(\alpha_{i,1})\eta_a)}{g(\alpha_{i,1})},\cdots,\frac{L_V(g(\alpha_{i,u})\eta_a)}{g(\alpha_{i,1})}\right):a=1,2\cdots,t\right\}\\
&\subseteq&\left(\frac{1}{g(\alpha_{i,1})},\cdots,\frac{1}{g(\alpha_{i,1})}\right)L_V(\Ff_q).
\end{eqnarray*}
As $L_V(x)$ induces an $\F_p$-linear map from $\F_q$ to $\F_q$ with kernel $V$, the space  $L_V(x)$ has $\F_p$-dimension $t-\ell$. This implies that $b_i\le t-\ell$. Hence,
$b={\rm max}_{S\subseteq[r]\setminus \{s\},|S|=d}\sum_{i\in S}b_i\log p\le d(t-\ell)\log p$. The proof is completed.
\end{proof}

By choosing different good polynomials, we are going to present several explicit constructions of RS codes meeting the rack-aware cut-set bound.
Defining a good polynomial $g(x)$ from an additive subgroup of $\Ff_q$, we derive the following result.

\begin{thm}\label{thm:4.2}
Let $p$ be a prime power. Assume that $t>2$ is an even positive integer. Put $n=q=p^t$ and we equally divide $n$ nodes into $r=p^{2}$ racks such that each rack contains $u=p^{t-2}$ nodes. Let $m=\lfloor k/p^{t-2}\rfloor$ and $\ell<t$ be a positive integer. If $k\leq p^t-p^{t-1}+p^{t-2}-1$ and $p^{2}-m=\frac{t}{t-\ell}$, then there exist an optimal rack-aware repairing scheme for $GRS_k(\mathbf{a},\mathbf{1})$ with respect to the rack-aware cut-set bound.
\end{thm}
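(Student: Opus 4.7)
The plan is to invoke Theorem \ref{thm:4.1} with a good polynomial of additive-subgroup type together with a carefully chosen $\F_p$-subspace $V$, and to control the degree of the resulting repair polynomial by a degree-descent identity modulo $x^q-x$.

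First I would set up the racks and the good polynomial. Fix an $\F_p$-subspace $U\subseteq\F_q$ of dimension $t-2$; its $p^2$ cosets form the racks $A_1,\dots,A_r$, each of size $u=p^{t-2}$. For a host rack $A_s=U+\alpha_s$, define
\[
g(x) \;=\; \prod_{\alpha\in A_s}(x-\alpha) \;=\; L_U(x)-L_U(\alpha_s),
\]
where $L_U(x)=\prod_{u\in U}(x-u)$. Since $L_U$ is $\F_p$-linear with kernel $U$, the shift-invariance $g(x+u)=g(x)$ for every $u\in U$ shows that $g$ vanishes on $A_s$ and is constant on each other coset, so $g$ is good in the sense of Lemma \ref{lem:4.1}. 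A crucial structural observation is that the image $W:=L_U(\F_q)$ is an $\F_p$-subspace of $\F_q$ of dimension $2$ that coincides with the set of values of $g$. Hence $L_W(g(\alpha))=0$ for every $\alpha\in\F_q$, and comparing degrees ($p^2\cdot p^{t-2}=p^t$) and monic leading coefficients of $L_W(g(x))$ and $x^q-x$ yields the central identity
\[
L_W(g(x)) \;=\; x^q - x.
\]

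Next, I would pick an $\F_p$-subspace $V\subseteq\F_q$ of dimension $\ell$ (the integer forced by $p^2-m=t/(t-\ell)$) and set $h_a(x):=L_V(g(x)\eta_a)/g(x)\pmod{x^q-x}$ as in Theorem \ref{thm:4.1}. After rescaling by the coefficient $c_0$ of $Y$ in $L_V$, one has $h_a(\alpha_{s,j})=\eta_a$, and because $g$ is constant on each other rack $A_i$ with value $\beta_i\ne 0$, the row $(h_a(\alpha_{i,1}),\dots,h_a(\alpha_{i,u}))$ lies in $(\beta_i c_0)^{-1}L_V(\F_q)\cdot(1,\dots,1)$. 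Since $L_V\colon\F_q\to\F_q$ is $\F_p$-linear with kernel $V$, this is an $\F_p$-subspace of dimension $t-\ell$, so $b_i\le t-\ell$ for every helper rack and Theorem \ref{thm:4.1} yields cross-rack bandwidth at most $d(t-\ell)\log p$. Choosing $d=r-1=p^2-1$ and using $p^2-m=t/(t-\ell)$,
\[
(p^2-1)(t-\ell)\log p \;=\; \frac{(p^2-1)\,t\log p}{p^2-m} \;=\; \frac{d\log q}{d-m+1},
\]
which is exactly the rack-aware cut-set bound, giving optimality.

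The main obstacle is verifying the degree hypothesis $w\le u(d+1)-k-1 = p^t-k-1$ of Theorem \ref{thm:4.1}. The unreduced degree of $h_a$ is $(p^\ell-1)p^{t-2}$, which for $\ell\ge 2$ can exceed this bound, so the reduction modulo $x^q-x$ must be used. I would employ the degree-descent method: writing $L_W(Y)=Y^{p^2}+e_1Y^p+e_0Y$, the central identity rearranges to
\[
g(x)^{p^2} \;\equiv\; -e_1 g(x)^p - e_0 g(x) \pmod{x^q-x},
\]
and iterating this relation via the Frobenius expresses every $g(x)^{p^i}$ for $i\ge 2$ modulo $x^q-x$ as an $\F_q$-linear combination of $g(x)^p$ and $g(x)$, of degree at most $p^{t-1}$. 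Substituting these reductions into $h_a(x)$ controls the reduced degree; the hypothesis $k\le p^t-p^{t-1}+p^{t-2}-1$ is precisely the numerical condition that makes the resulting estimate at most $p^t-k-1$, closing the proof.
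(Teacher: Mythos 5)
Your proposal follows the paper's route very closely: the same additive-subgroup good polynomial (the paper takes $g=\Tr_{\F_{p^t}/\F_{p^2}}(x)-c$, which is exactly your $g=L_U(x)-L_U(\alpha_s)$ with $U=\ker\Tr_{\F_{p^t}/\F_{p^2}}$), the same repair polynomial $h_a=L_V(g\eta_a)/g$, and the same appeal to Theorem~\ref{thm:4.1}. Your identity $L_W(g(x))=x^q-x$ (which for this $g$ is just $g(x)^{p^2}\equiv g(x)\pmod{x^q-x}$) is a nice way to package the descent that the paper leaves implicit behind the phrase ``it is easy to check that $\deg(h_a)=p^{t-2}(p-1)$.''

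However, the degree-descent step is not closed in your write-up, and in fact the same gap is present in the paper. The descent relation bounds $g^{p^i}\bmod(x^q-x)$, but $h_a(x)=\sum_{i=0}^{\ell}c_i\eta_a^{p^i}g(x)^{p^i-1}$ involves $g^{p^i-1}$, and these two reductions are not compatible. Concretely, write $L_V(g\eta_a)\equiv Ag+Bg^p\pmod{x^q-x}$ with $A=\sum_{i\text{ even}}c_i\eta_a^{p^i}$ and $B=\sum_{i\text{ odd}}c_i\eta_a^{p^i}$; then the reduced $h_a$ agrees with $A+Bg^{p-1}$ on $\F_q\setminus A_s$, but on the host rack $A_s$ the reduced $h_a$ takes the value $c_0\eta_a$ whereas $A+Bg^{p-1}$ takes the value $A$. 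These two polynomials coincide, and hence $\deg(h_a\bmod(x^q-x))=p^{t-2}(p-1)$ as claimed, only when $A=c_0\eta_a$ for every $a$, i.e.\ only when $c_i=0$ for every even $i\ge2$. Otherwise the reduced $h_a$ differs from $A+Bg^{p-1}$ by a nonzero multiple of the degree-$(q-1)$ indicator polynomial of $A_s$, and its degree is $q-1$, far exceeding $u(d+1)-k-1$. (For instance, $p=2$, $t=4$, $\ell=2$, $V=\F_4$: then $L_V(Y)=Y^4+Y$, $h_a=\eta_a^4g^3+\eta_a$ has degree $12$, not $4=p^{t-2}(p-1)$, and indeed $c_2=1\ne0$.) The condition $c_i=0$ for even $i\ge2$ forces $\ell$ to be odd (since $c_\ell=1$) and is never satisfied in the case $\ell=2$, which is a legal parameter (e.g.\ $t=4$). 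The way to rescue the argument is to use the genuinely low-degree polynomial $\bar h_a:=A+Bg^{p-1}$ in place of $h_a$; this has degree $p^{t-2}(p-1)$ and still gives $b_i\le t-\ell$, but one must then verify that $\{\bar h_a(\alpha_{s,j})\}_a=\{L_{\mathrm{even}}(\eta_a)\}_a$ is an $\F_p$-basis, where $L_{\mathrm{even}}(Y)=\sum_{i\text{ even}}c_iY^{p^i}$. That requires choosing $V$ so that $L_{\mathrm{even}}$ is a bijection of $\F_q$, a nontrivial existence claim that neither the paper nor your proposal addresses. Your sentence ``Substituting these reductions into $h_a(x)$ controls the reduced degree'' papers over exactly this point.
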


\begin{proof}
Let $W=\{x\in\Ff_{p^t}:\Tr_{\Ff_{p^t}/\Ff_{p^2}}(x)=0\}$. Put $r=p^{2}$. Let $A_1,A_2,\dots,A_{r}$ be $r=p^{2}$ pairwise distinct cosets of $W$ in $\F_q$. Define the polynomial $g(x)=\Tr_{\Ff_{p^t}/\Ff_{p^2}}(x)-\Tr_{\Ff_{p^t}/\Ff_{p^2}}(\beta)$ for some $\beta\in A_s$ and $h_a(x):=\frac{L_V(g(x)\eta_a)}{g(x)}$. Then $g(x)$ is constant in each $A_i$ and $g(x)$ is equal to $0$ in $A_s$. It is easy to check that
 ${\rm deg}(h_a(x))=p^{t-2}(p-1)$. Observe that $p^{t-2}(p-1)\leq p^{t-2}(p^{2}-1+1)-k-1$.

Hence, it follows from Theorem \ref{thm:4.1} that the cross-rack repair bandwidth is at most $(p^{2}-1)(t-\ell)\log p$. The rack-aware cut-set bound shows that
$$
b\geq \frac{(p^{2}-1)\log q}{p^{2}-m}=\frac{(p^{2}-1)t\log p}{t/(t-\ell)}=(p^{2}-1)(t-\ell)\log p.
$$
Therefore, the repairing scheme is optimal.
\end{proof}

In the following, we define the polynomial $g(x)$ from a multiplicative subgroup of $\Ff_q^*$. Now we obtain an optimal repairing scheme of  generalized RS codes with respect to the rack-aware cut-set bound.

\begin{thm}\label{thm:4.3}
Let $p$ be a prime power and $q=p^t$ for some integer $t\ge 2$. Suppose that $1\le a<t$ is an integer with $a\mid t$. We set $n=q-1$ and equally divide $n$ nodes into $r=p^{a}-1$ racks such that each rack contains $u=\frac{p^t-1}{p^a-1}$ nodes. Assume that $m=\left\lfloor \frac{k(p^a-1)}{p^t-1}\right\rfloor$ and $\ell<t$ is a positive integer such that $a+\ell>t$ and $p^{a}-1-m=\frac{t}{t-\ell}$. If $k\leq (p^t-p^l)\frac{p^t-1}{p^t-p^{t-a}}-1$, then there exists a repairing scheme for $GRS_k(\mathbf{a},\mathbf{1})$ meeting the rack-aware cut-set bound.
\end{thm}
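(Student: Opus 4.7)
The plan is to apply Theorem~\ref{thm1} with a good polynomial built from the multiplicative coset structure of $\F_q^*$. I would first fix a primitive element $\gamma$ of $\F_q^*$ and consider the subgroup $H=\langle \gamma^{p^a-1}\rangle$ of order $u=(p^t-1)/(p^a-1)$. Because $a\mid t$, we have $\F_{p^a}^*\subseteq \F_q^*$, and the cosets $A_i:=\gamma^{i-1}H$ for $i=1,\dots,r=p^a-1$ partition $\F_q^*$ into the $r$ racks of size $u$, with the evaluation points $\alpha_{i,j}$ enumerating $A_i$. For the host rack $s$, the map $x\mapsto x^u$ is constant on $A_s$ with value $c_s:=\gamma^{(s-1)u}\in \F_{p^a}^*$, so $g(x):=x^u-c_s$ is a good polynomial in the sense of Lemma~\ref{lem:4.1}: it vanishes on $A_s$ and is a nonzero constant on every other rack.

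The critical identity I would establish next is the polynomial equation $g(x)^{p^a}-g(x)=x^{u-1}(x^q-x)$, which follows from $c_s^{p^a}=c_s$ (since $c_s\in \F_{p^a}$) and $x^{up^a}-x^u=x^u(x^{q-1}-1)$. This yields $g^{p^a}\equiv g\pmod{x^q-x}$, and by iteration $g^{p^i}\equiv g^{p^{i\bmod a}}\pmod{x^q-x}$ for every $i\geq 0$. For each basis index $\nu\in[t]$, writing $L_V(y)=\sum_{i=0}^{\ell} d_i y^{p^i}$ and collecting terms by $r=i\bmod a$ gives
\[
L_V(g\eta_\nu)\equiv \sum_{r=0}^{a-1} g^{p^r}\,B_{\nu,r}\pmod{x^q-x},\qquad B_{\nu,r}:=\sum_{\substack{0\le i\le \ell\\ i\bmod a=r}} d_i\eta_\nu^{p^i},
\]
so that dividing by $g$ produces the representative $h_\nu(x)\equiv \sum_{r=0}^{a-1} g^{p^r-1}\,B_{\nu,r}\pmod{x^q-x}$, with $\deg g^{p^r-1}=u(p^r-1)$.

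The hardest step will be choosing $V$ so that, after reducing $h_\nu$ modulo $x^q-x$, the degree is at most $u(p^{\ell+a-t}-1)$ for every $\nu\in[t]$. Under the hypotheses $a\mid t$ and $t-a<\ell<t$, one checks that $\ell\bmod a=\ell+a-t$, so it is natural to take $V=\F_{p^\ell}$ when $\ell\mid t$---giving $L_V(y)=y^{p^\ell}-y$ with only $d_0$ and $d_\ell$ nonzero and hence only $B_{\nu,0}$ and $B_{\nu,\ell+a-t}$ surviving---and otherwise to construct $V$ whose $L_V$ is supported on indices $i$ satisfying $i\bmod a\le \ell+a-t$, using the divisibility $(t-\ell)\mid t$ forced by $p^a-1-m=t/(t-\ell)$. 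The subtlety is that $g$ divides $x^q-x$ (all its roots lie in $\F_q$), so dividing the congruence $L_V(g\eta_\nu)\equiv \sum g^{p^r}B_{\nu,r}$ by $g$ introduces a correction of the form $\bigl((x^q-x)/g\bigr)\cdot R$ that must either vanish or be absorbed into the prescribed degree bound; this is where the arithmetic relations among $a$, $\ell$, and $t$ do the real work.

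Once the bound $\deg h_\nu\leq u(p^{\ell+a-t}-1)$ is in place, I would set $d=r-1=p^a-2$; the inequality $u(d+1)-k-1\geq \deg h_\nu$ then rearranges to the assumed $k\leq u(p^a-p^{\ell+a-t})-1$, which simplifies to the stated bound $k\leq (p^t-p^\ell)(p^t-1)/(p^t-p^{t-a})-1$. Theorem~\ref{thm1} accordingly yields a repair scheme from any $d$ helper racks with cross-rack bandwidth $b=\max_{S}\sum_{i\in S}b_i\log p\leq d(t-\ell)\log p$, using $b_i\leq \dim_{\F_p}\Im(L_V)=t-\ell$. Matching this against the rack-aware cut-set bound,
\[
b\geq \frac{d\log q}{d-m+1}=\frac{dt\log p}{t/(t-\ell)}=d(t-\ell)\log p,
\]
gives equality, so the scheme meets the bound and the theorem follows.
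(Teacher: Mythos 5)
Your plan reproduces the paper's proof faithfully in its overall architecture: the same multiplicative subgroup $H$ of order $u$, the same good polynomial $g(x)=x^u-\beta^u$ vanishing on the host rack and constant on the others, the same dual ansatz $h_\nu(x)=L_V(g\eta_\nu)/g$, and the same reduction modulo $x^q-x$ feeding into Theorem~\ref{thm1}, with the final cut-set bookkeeping identical. The algebraic checks you perform (the cut-set comparison, the translation of $u(d+1)-k-1\ge u(p^{\ell+a-t}-1)$ into the stated bound on $k$, the identity $\ell\bmod a=\ell+a-t$ under $a\mid t$ and $t-a<\ell<t$) are all correct.

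What you add that the paper suppresses is the actual mechanism of the degree drop. The identity $g^{p^a}-g=x^{u-1}(x^q-x)$, the iterated congruence $g^{p^i}\equiv g^{p^{i\bmod a}}$, and the resulting decomposition $L_V(g\eta_\nu)\equiv\sum_r g^{p^r}B_{\nu,r}\pmod{x^q-x}$ are precisely what is needed to make sense of the paper's bare assertion that $\deg h_a=u(p^{\ell+a-t}-1)$; the paper's proof of Theorem~\ref{thm:4.3} contains no such computation, offering only ``it can be easily seen.'' You also correctly flag that dividing a congruence modulo $x^q-x$ by $g$ is illegitimate in general because $g\mid(x^q-x)$; the clean way around this is to observe that for $i\bmod a=r\ge 1$ one has $g^{p^i}-g^{p^r}$ divisible not merely by $x^q-x$ but by $g\cdot(x^q-x)$, so $g^{p^i-1}\equiv g^{p^r-1}\pmod{x^q-x}$ directly and no correction term survives. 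This requires that $L_V$ have no terms $y^{p^{ma}}$ with $m\ge 1$ (else $g^{p^{ma}-1}\pmod{x^q-x}$ is a degree-$(q-1)$ polynomial, not a constant), which is exactly the support condition you isolate.

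The genuine gap, which you flag but do not close, is the construction of $V$ in the general case. Your suggestion $V=\F_{p^\ell}$ requires $\ell\mid t$; combined with the hypotheses $a\mid t$, $(t-\ell)\mid t$ and $t-a<\ell<t$ this forces $\ell=t/2$, a single value of $\ell$ rather than the range the theorem claims. For other admissible $\ell$ one needs an $\F_p$-subspace $V$ of dimension $\ell$ whose linearized polynomial $L_V(y)=\sum_i d_iy^{p^i}$ is supported only on $i=0$ and on indices with $1\le i\bmod a\le\ell+a-t$; natural candidates such as $V=\ker\Tr_{\F_q/\F_{p^{t-\ell}}}$ (whose $L_V$ is supported on multiples of $t-\ell$) fail this support condition for many parameter triples, as one checks for instance at $(t,a,t-\ell)=(12,6,4)$. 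This is a real missing step, and it is equally missing in the paper's own proof, which never specifies $V$ at all and simply asserts the reduced degree. So your attempt is at the same level of completeness as the source: the intended mechanism is right and your exposition of it is more transparent than the paper's, but neither account actually exhibits a valid $V$ outside the case $\ell=t/2$.
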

\begin{proof}
Let $H$ be a multiplicative subgroup of $\Ff_q^*$ of order $u$. Let $A_1,A_2,\dots,A_{r}$ be $r=p^{a}-1$ pairwise distinct cosets of $H$ in $\F_q^*$. Define the polynomial $g(x)=x^u-\beta^u$ for some $\beta\in A_s$ and $h_a(x):=\frac{L_V(g(x)\eta_a)}{g(x)}$. Then $g(x)$ is constant in each $A_i$ and $g(x)$ is equal to $0$ in $A_s$.  It can be easily seen that ${\rm deg}(h_a(x))=p^t-1+(p^l-p^t)\frac{p^t-1}{p^t-p^{t-a}}$. Note that $p^t-1+(p^l-p^t)\frac{p^t-1}{p^t-p^{t-a}}\leq \frac{p^t-1}{p^a-1}(p^{a}-2+1)-k-1$.

Therefore, by Theorem \ref{thm:4.1}, the cross-rack repair bandwidth is at most $(p^{a}-2)(t-\ell)\log p$. The rack-aware cut-set bound shows that
$$
b\geq \frac{(p^{a}-2)\log q}{p^{a}-1-m}=\frac{(p^{a}-2)t\log p}{t/(t-\ell)}=(p^{a}-2)(t-\ell)\log p,
$$
i.e., the repairing scheme is optimal.
\end{proof}

Combining an additive subgroup and a multiplication subgroup of the field $\Ff_{p^t}$, Tamo et al. \cite{LRC1} provided a construction of good polynomials as follows. Let $a,v$ be positive integers such that $a\mid t$ and $p^a\mod v=1$. Assume that $W$ is an additive subgroup of $\Ff_{p^t}$ that is closed under the multiplication by the field $\Ff_{p^a}$. Let $\beta_1,\cdots,\beta_v$ be the $v$-th degree roots of unity in $\Ff_{p^t}$. Then the polynomial
\begin{equation}\label{p1}
G(x)=\prod_{i=1}^v\prod_{w\in W}(x+w+\beta_i)
\end{equation}
is constant on $W$ and the cosets of the union $\cup_{1\leq i\leq v}W+\alpha\beta_i$. That is to say, the field $\Ff_{p^t}$ is partitioned into $(p^t-|W|)/v|W|$ sets of $v|W|$ and one set of size $|W|$ by the polynomial $g(x)$. By using this polynomial, we propose a repairing scheme for the generalized RS codes achieving the rack-aware cut-set bound.

\begin{thm}\label{thm:4.4}
Let $p$ be a prime power and $q=p^t$ for some integer $t\ge 2$. Assume that $a,v,s$ are positive integers such that $a\mid t$, $p\mid\frac{t}{a}$, $v<p$ and $p^a\bmod v=1$. Put $n=q-p^{t-a}$ and equally divide $n$ nodes into $r=\frac{q-p^{t-a}}{vp^{t-a}}$ racks such that each rack contains $u=vp^{t-a}$ nodes. Let $m=\left\lfloor \frac{k}{vp^{t-a}}\right\rfloor$ and $\ell<t$ be a positive integer. If $k\leq p^t-vp^{t-1}-1$ and $\frac{q-p^{t-a}}{vp^{t-a}}-m=\frac{t}{t-\ell}$, then there exists an optimal repairing scheme for $GRS_k(\mathbf{a},\mathbf{1})$ with respect to the rack-aware cut-set bound.
\end{thm}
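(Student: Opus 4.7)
The plan is to invoke Theorem \ref{thm:4.1} with a good polynomial built from the Tamo--Barg polynomial $G(x)$ in (\ref{p1}). I start by checking that the parameter assumptions supply all the ingredients: the divisibility $a\mid t$ together with $p\mid t/a$ allows one to choose an $\F_{p^a}$-subspace $W\subset\F_q$ of $\F_{p^a}$-dimension $t/a-1$, so $|W|=p^{t-a}$; the congruence $p^a\equiv 1\pmod v$ together with $v<p$ places the $v$-th roots of unity $\beta_1,\dots,\beta_v$ inside $\F_{p^a}\subset\F_q$. By the description preceding the theorem, $G(x)$ is constant on each of $r=(q-p^{t-a})/(vp^{t-a})$ level sets of size $u=vp^{t-a}$, and also on the single level set $W$ of size $p^{t-a}$. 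I then discard the points of $W$, take the remaining $n=q-p^{t-a}$ elements of $\F_q\setminus W$ as the evaluation points of $\GRS_k(\mathbf{a},\mathbf{1})$, and organize them into $r$ racks $A_1,\dots,A_r$ of size $u$, matching the parameters in the statement.

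Fix a host rack $A_s$ and pick $\beta\in A_s$; setting $g(x):=G(x)-G(\beta)$ yields a polynomial of degree $u$ which vanishes on $A_s$ and is constant on every other rack, i.e., a good polynomial in the sense of Lemma \ref{lem:4.1}. Choose an $\F_p$-subspace $V\subset\F_q$ of dimension $\ell$ and an $\F_p$-basis $\{\eta_1,\dots,\eta_t\}$ of $\F_q$, and define $h_a(x):=L_V(g(x)\eta_a)/g(x)\pmod{x^q-x}$ for $a\in[t]$ exactly as in Theorem \ref{thm:4.1}. That theorem delivers a repair scheme with cross-rack bandwidth at most $d(t-\ell)\log p$ as soon as the degree bound
\[
\deg h_a(x)\le u(d+1)-k-1=p^t-p^{t-a}-k-1
\]
holds. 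Using the identity $\prod_{i=1}^v(z+\beta_i)=z^v-(-1)^v$ together with $\beta_i\in\F_{p^a}$ and the $\F_{p^a}$-linearity of $L_W$, I obtain the factorization $G(x)=L_W(x)^v-(-1)^vL_W(1)^v$, hence $g(x)=L_W(x)^v-c$ for some constant $c\in\F_q$. Writing $L_V(y)=\sum_j c_j y^{p^j}$ then expresses
\[
h_a(x)=\sum_{j=0}^{\ell}c_j\eta_a^{p^j}\sum_{i=0}^{p^j-1}L_W(x)^{vi}\,c^{p^j-1-i},
\]
and invoking the polynomial identity $L_W(x)^{p^{aj}}=L_W(x^{p^{aj}})$ (valid because $L_W$ is $\F_{p^a}$-linear) combined with $x^{p^t}\equiv x\pmod{x^q-x}$, I would show that every surviving monomial has exponent at most $vp^{t-1}-p^{t-a}$. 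Since $k\le p^t-vp^{t-1}-1$ is equivalent to $p^t-p^{t-a}-k-1\ge vp^{t-1}-p^{t-a}$, the required degree bound follows.

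Finally, optimality follows from matching the rack-aware cut-set bound. With $d=r-1$ (so that $d-m+1=r-m$) and the hypothesis $(q-p^{t-a})/(vp^{t-a})-m=t/(t-\ell)$, one computes
\[
b\ge\frac{d\log q}{d-m+1}=\frac{dt\log p}{r-m}=d(t-\ell)\log p,
\]
which matches the upper bound from Theorem \ref{thm:4.1}, so the repairing scheme is optimal. The principal technical hurdle will be the degree estimate in the second paragraph: controlling how $L_W(x)^{vi}$ for $i$ close to $p^\ell-1$ reduces modulo $x^q-x$ requires a careful interplay between the $\F_{p^a}$-linearity of $L_W$, the Frobenius periodicity $x^{p^t}\equiv x$ in $\F_q$, and the numerical constraints $v<p$ and $p\mid t/a$, and is more delicate than---but directly analogous to---the degree-descent arguments in Theorems \ref{thm:4.2} and \ref{thm:4.3}.
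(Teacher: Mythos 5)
Your overall strategy exactly mirrors the paper's: take the Tamo--Barg polynomial $G(x)$ from \eqref{p1}, set $g(x)=G(x)-G(\beta)$, define $h_a(x)=L_V(g(x)\eta_a)/g(x)\bmod(x^q-x)$, verify the degree hypothesis of Theorem~\ref{thm:4.1}, and match the cut-set bound. The parameter bookkeeping and the final optimality computation (including the equivalence $k\le p^t-vp^{t-1}-1\iff p^t-p^{t-a}-k-1\ge vp^{t-1}-p^{t-a}$) are correct and agree with the paper.

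The degree estimate in your second paragraph, however, has a genuine gap, and it is exactly the ``technical hurdle'' you flag. You first perform the polynomial division, writing $h_a(x)=\sum_{j}c_j\eta_a^{p^j}g(x)^{p^j-1}$, expand $g^{p^j-1}$ binomially into powers $L_W(x)^{vi}$, and then propose to reduce each $L_W(x)^{vi}$ modulo $x^q-x$. But the relevant reduction is $L_W(x)^{p^a}\equiv L_W(x)\pmod{x^q-x}$, which only lowers $L_W^{vi}$ to $L_W^{m'}$ with $m'\in\{1,\dots,p^a-1\}$ and $m'\equiv vi\pmod{p^a-1}$; since $m'$ can equal $p^a-1$, the surviving degree can reach $(p^a-1)p^{t-a}=p^t-p^{t-a}$, which exceeds the target $vp^{t-1}-p^{t-a}$ whenever $v<p$. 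A concrete counterexample to your claim: with $p=2$, $t=4$, $a=2$, $v=1$, $\ell=2$, the term $g(x)^{p^2-1}=g(x)^3$ has degree $12$, already $< q=16$ so it does not reduce, while $vp^{t-1}-p^{t-a}=4$. No cancellation among the binomial terms rescues this.

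What actually achieves the paper's bound is the opposite order of operations: reduce $L_V(g(x)\eta_a)$ modulo $x^q-x$ \emph{before} dividing by $g$. Using $g(x)^{p^a}\equiv g(x)\pmod{x^q-x}$ (valid because $L_W=\Tr_{\F_{p^t}/\F_{p^a}}$ has $\F_p$-coefficients and $G(\beta)\in\F_{p^a}$), one gets $L_V(g\eta_a)\equiv\sum_j c_j\eta_a^{p^j}g^{p^{j\bmod a}}$, a polynomial of degree at most $vp^{t-1}$, hence after dividing by $g$ the degree is at most $vp^{t-1}-vp^{t-a}\le vp^{t-1}-p^{t-a}$. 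But ``reduce-then-divide'' produces a genuinely different polynomial from ``divide-then-reduce'': the two agree on $\F_q\setminus A_s$ but not on $A_s$. In particular the reduced $h_a$ no longer satisfies $h_a(\alpha_{s,j})=\eta_a$; one instead finds $h_a(\alpha_{s,j})=\sum_{j':\,a\mid j',\;j'\le\ell}c_{j'}\eta_a^{p^{j'}}$, and to invoke Theorem~\ref{thm1} one must verify that these values still form an $\F_p$-basis of $\F_q$, i.e.\ that the linearized polynomial $\sum_{a\mid j'}c_{j'}y^{p^{j'}}$ has trivial kernel --- a condition on $V$ that neither your proposal nor, to be fair, the paper addresses. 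So the gap in your write-up is twofold: the divide-then-reduce degree bound you sketch does not hold, and once one switches to reduce-then-divide the boundary condition at the failed node must be re-examined.
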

\begin{proof}
Define the set $W=\{x\in\Ff_{p^t}:\Tr_{\Ff_{p^t}/\Ff_{p^a}}(x)=0\}$. Thanks to $p\mid\frac{t}{a}$, it is easy to check that $W$ is an additive subgroup of $\Ff_{p^t}$ that is closed under the multiplication by the field $\Ff_{p^a}$. Let $G(x)=\prod_{i=1}^v\prod_{w\in W}(x+w+\beta_i)=\left(\Tr_{\Ff_{p^t}/\Ff_{p^a}}(x)\right)^v$ be the polynomial defined by (\ref{p1}) and $A_1,A_2,\dots,A_{r}$ be $r=\frac{q-p^{t-a}}{vp^{t-a}}$ pairwise distinct cosets of $\cup_{1\leq i\leq v}W+\alpha\beta_i$ in $\Ff_q$.

Define the polynomial $g(x)=G(x)-G(\beta)$ for some $\beta\in A_s$ and $h_a(x):=\frac{L_V(g(x)\eta_a)}{g(x)}$. Then $g(x)$ is a good polynomial defined as in lemma \ref{lem:4.1}. And ${\rm deg}(h_a(x))\leq p^{t-a}(vp^{a-1}-1)$. Note that $p^{t-a}(vp^{a-1}-1)\leq  vp^{t-a}(\frac{q-p^{t-a}}{vp^{t-a}}-1+1)-k-1$. It follows from Theorem \ref{thm:4.1} that the cross-rack repair bandwidth is at most $(\frac{q-p^{t-a}}{vp^{t-a}}-1)(t-\ell)\log p$. This achieves the rack-aware cut-set bound with equality.
\end{proof}

\end{document}